\documentclass[11pt,letter]{article}
\usepackage[papersize={8.5in,11in},margin=1in]{geometry}
\usepackage{graphicx}
\usepackage{amssymb,amsthm,amsmath}
\usepackage{algorithm}
\usepackage[hidelinks]{hyperref}
\usepackage[indLines=true]{algpseudocodex}
\usepackage{tikz}
\usetikzlibrary{positioning}
\usetikzlibrary{arrows.meta, quotes}
\usetikzlibrary{decorations.pathmorphing,calligraphy}

\newtheorem{lemma}{Lemma}
\newtheorem{theorem}{Theorem}
\newtheorem{corollary}{Corollary}

\theoremstyle{definition}
\newtheorem{definition}{Definition}

\theoremstyle{remark}

\newtheorem{remark}{Remark}

\usepackage[disable]{notestoself}

\newif\ifsosa

\title{\textbf{\boldmath Space-Efficient Hierholzer:\\ Eulerian Cycles in $\mathrm{O}(m)$ Time and $\mathrm{O}(n)$ Space}}

\ifsosa
    \author{Anonymous author(s)}
    \date{}
\else
\author{
  Ziad Ismaili Alaoui\thanks{University of Liverpool, United Kingdom. \texttt{\{ziad.ismaili-alaoui}, \texttt{sebastian.wild\}\,@\,liverpool.ac.uk}} 
  \and 
  Detlef Plump\thanks{University of York, United Kingdom. \texttt{detlef.plump\,@\,york.ac.uk}}
  \and 
  Sebastian Wild\footnotemark[1] \thanks{Philipps-Universität Marburg, Germany. \texttt{wild\,@\,informatik.uni-marburg.de}}
}
\date{}
\fi

\begin{document}

\maketitle

\begin{abstract}
	We describe a simple variant of Hierholzer's algorithm that finds an Eulerian cycle in a (multi)graph with $n$ vertices and $m$ edges using $\mathrm{O}(n \lg m)$ bits of working memory. 
	This substantially improves the working space compared to standard implementations of Hierholzer's algorithm, which use $\mathrm{O}(m \lg n)$ bits of space. 
	Our algorithm runs in linear time, like the classical versions, but avoids an $\mathrm{O}(m)$-size stack of vertices or storing information for each edge. 
	To our knowledge, this is the first linear-time algorithm to achieve this space bound, 
	and the method is very easy to implement.
	The correctness argument, by contrast, is surprisingly subtle; we give a detailed formal proof.
    The space savings are particularly relevant for dense graphs or multigraphs with large edge multiplicities.
\end{abstract}

\section{Introduction}
\label{s:introduction}
An \emph{Eulerian cycle} in a graph is a closed walk that traverses every edge exactly once. A~graph is called \emph{Eulerian} if it contains an Eulerian cycle.
Euler's study of the existence of Eulerian cycles and their computation is often cited as the birth of modern graph theory.
Apart from historical significance, efficiently computing them has applications in, e.g., genome assembly~\cite{pevzner2001eulerian}, routing problems~\cite{orloff1974fundamental}, and 3D printing~\cite{yamamoto2022novel}. 

Classic algorithms such as Fleury's algorithm or Hierholzer's algorithm~\cite{hierholzer1873moglichkeit}, both dating from the late 19th century, are simple to describe and find an Eulerian cycle in any Eulerian graph in polynomial time. 
Indeed, both consider each edge only once, and Hierholzer's algorithm can be implemented to run in optimal linear time.
Being both simple and efficient, the latter is the method of choice in practice.
Typical implementations of Hierholzer's algorithm (see below) use $\mathrm{O}(m \lg n)$ bits of working memory. (Here and throughout, $\lg=\log_2$.)

In this paper, we present a simple variant of Hierholzer's algorithm that uses only $\mathrm{O}(n \lg m)$ bits of working memory and runs in optimal $\mathrm{O}(n+m)$ time.  For dense graphs or multigraphs with many parallel edges, this is a substantial saving.
Note that this space can neither hold the entire input nor the entire output of the problem;
we treat the input graph as given in read-only memory and produce the Eulerian cycle \emph{in order} to a write-only stream.
In particular, our algorithm would be suitable for an application that directly consumes and uses edges of the Eulerian cycle as they are computed, potentially without ever storing the entire cycle.
By \emph{working memory}, we mean the extra space occupied by auxiliary data structures during the computation.
Our algorithm relies on an assumption about the graph representation, which is satisfied in typical adjacency-list implementations: for directed graphs, we require that we can iterate over incoming and outgoing edges; for undirected graphs, we (instead) require a consistent ordering of vertices across all adjacency lists.

To our knowledge, we present the first algorithm with working space $\mathrm{O}(n \lg m)$ bits that runs in linear time and explicitly produces an Eulerian cycle as its output.
Our algorithm is also usable in rule-based models of computation, such as graph-transformation languages like GP\,2~\cite{plump2012design}, which do not natively support stacks or recursion.
This is not known to be true for other standard efficient implementations of Eulerian cycle algorithms.

\subsection{Hierholzer's Algorithm}

Hierholzer's algorithm was originally described in 1873~\cite{hierholzer1873moglichkeit}, 
unsurprisingly without detailed information about data structures for efficient execution on a computer.
The original algorithm consists of following an arbitrary walk in an Eulerian graph until we get stuck, which can only happen when we closed a cycle.  If this cycle has not used all edges yet, starting at such an unused edge, again following an arbitrary walk of unused edges can again only get stuck when closing a cycle; we can then fuse this new cycle into the previous one as a detour before continuing the original cycle.
By iterating this procedure, we eventually obtain an Eulerian cycle.\footnote{Throughout the paper, we always assume the graphs to be Eulerian, i.e. to contain an Eulerian cycle.}

\subsection{Typical Implementations}

\notetoself{
	NetworkX
	\url{https://networkx.org/documentation/stable/_modules/networkx/algorithms/euler.html\#eulerian_circuit}
	removes edges from graph but uses otherwise only stack of vertices
	
	JGraphT
	\url{https://github.com/jgrapht/jgrapht/blob/master/jgrapht-core/src/main/java/org/jgrapht/alg/cycle/HierholzerEulerianCycle.java}
	uses linked list of edges for tour
}

Several implementations of Hierholzer's algorithm are in use.
Closest to the above description is using a linked list of edges to represent the Eulerian cycle;
then one can insert cyclic detours as sketched above at any point in the tour.
This approach is used, e.g., in the widely used Java graph library \emph{JGraphT}~\cite{jgrapht}.
It clearly needs $\Theta(m \lg m)$ bits of working memory for storing the linked list of edges.

More amenable to producing the output directly in the correct order is the approach to treat Hierholzer's algorithm as an \emph{edge-centric} depth-first-search. 
We follow the basic steps of a graph traversal, but mark edges rather than vertices. So we only backtrack from a vertex when all of its incident edges have been traversed.
When this happens for the first time, the last edge traversed to this dead end was the edge closing the first cycle of Hierholzer's algorithm, and can thus be output now as the \emph{last} edge of the Eulerian cycle. Iteratively, the same is true for the next dead end vertex.\footnote{
	See, e.g., slide 33 of these lecture notes: \url{https://www.wild-inter.net/teaching/ea/09-graph-algorithms.pdf\#page=40}.
}
Both the stack and the marking of edges require $\Theta(m \lg m)$ bits of working space in the worst case.
A minor twist of this approach, where edges are not marked, but removed from a copy of the input graph,
is used, e.g., in \emph{NetworkX}~\cite{networkx}, a popular graph framework for Python.
The copy, of course, also uses $\Omega(m \lg n)$ bits of space.

\subsection{Our Idea}

Our key idea is the following observation. 
While an edge-centric DFS is \emph{one} correct implementation of Hierholzer's algorithm,
it is a needlessly specific one.
Concretely, there is no need to keep the entire constructed walk on a stack;
the order in which we add remaining ``detour cycles'' to the tour is immaterial for Eulerian cycles anyway.
The only constraint we have is that we \emph{reserve} the edges of the \emph{original} cycle/traversal to be the last ones to backtrack on, to make sure that we end at the starting vertex and do not omit any edges.
For that, it suffices to remember one incoming edge per vertex.
The second thing we need to remember is which edges have already been traversed;
but for that, we can store for each vertex an iterator in its adjacency list.

\subsection{Directed and Undirected Eulerian Cycles}

Hierholzer's algorithm works equally well for directed and undirected graphs, and it can deal with parallel edges even if these are represented as duplicate entries in the adjacency list.
Since the implementation is slightly cleaner to state for directed graphs, in what follows, we consider directed Eulerian multigraphs that are given in adjacency list representation with no further assumptions.

For directed graphs, it is convenient to (logically) reverse directions of edges; then the order of edges produced via backtracking from the edge-centric DFS directly gives the Eulerian cycle in the correct order.
For undirected graphs, clearly, we can skip the reversal of edges as the Eulerian cycle can be traversed in both directions.

\subsection{Related Work}

Glazik, Schiemann, and Srivastav~\cite{glazik2023one} present a streaming algorithm for Eulerian cycles.  While they also achieve $\mathrm{O}(n\lg n)$-space and in a much more restrictive model, it is not clear if their algorithm can be implemented to run in linear total time, and they construct the cycle implicitly via a successor function.
Their algorithm is arguably more complicated than Hierholzer's algorithm.

Hagerup, Kammer, and Laudahn~\cite{hagerup2019space} proposed an algorithm using $\mathrm{O}(n+m)$ time and $\mathrm{O}(n+m)$ \emph{bits} of space for computing Eulerian cycles. They do not need strong assumptions regarding the graph data representation, but focus on undirected graphs only.  They build a nontrivial dynamic data structure in each vertex that stores which pairs of edges are adjacent in the Eulerian cycle.  Their algorithm is considerably more involved than the typical implementations of Hierholzer's algorithm using $\mathrm{O}(n+m)$ \emph{words} of space.


\subsection{Outline}

Our paper is organised as follows. Section~\ref{s:preliminaries} introduces basic notation. Section~\ref{s:algorithm} presents the \textsc{Space-Efficient-Hierholzer} algorithm. Finally, Section~\ref{s:analysis} proves correctness, establishes the $\mathrm{O}(m)$ running time, and shows that the working memory usage is $\mathrm{O}(n \lg m)$.

\section{Preliminaries}
\label{s:preliminaries}
We write $[n]$ for $\{1, 2, \dots, n\}$. Throughout the paper, $G = (V, E)$ denotes a finite directed multigraph, where $V$ is the set of vertices and $E$ is the multiset of directed edges, with $|V| = n$ and $|E| = m$. Loop edges are allowed. We assume without loss of generality that $V = [n]$. We denote by $d^+(v)$ and $d^-(v)$ the out-degree and in-degree of vertex~$v$, respectively. The $i$th out-neighbour of a vertex $v$ is denoted by $\Gamma^+(v, i)$, and its $i$th in-neighbour by $\Gamma^-(v, i)$, assuming arbitrary but fixed orderings of incoming and outgoing edges for each vertex.
We concisely write $uv$ for a (directed) edge from $u$ to $v$. 

\begin{definition}[Eulerian, Eulerian cycle]
    A directed graph $G$ is \emph{Eulerian} if and only if it is strongly connected\/\footnote{A set of vertices is considered \emph{strongly connected} if, for every pair of vertices $u,v$, there is a path from $u$ to $v$. This condition rules out isolated vertices.} and the in-degree equals the out-degree at every vertex; i.e. $d^+(v) = d^-(v)$ for all $v \in V$. An \emph{Eulerian cycle} in $G$ is a closed trail that traverses every edge exactly once and returns to its starting vertex.
\end{definition}

\noindent A \emph{trail} in a graph is a walk in which all edges are distinct. A \emph{closed trail} is a trail whose first and last vertices coincide.

We assume that basic arithmetic and memory operations on $\lg m$-bit integers take constant time, and that the in- and out-degree functions as well as in- and out-neighbour queries can be evaluated in constant time per operation.

\section{Space-Efficient Implementation of Hierholzer's Algorithm}
\label{s:algorithm}
We now describe our algorithm, which computes an Eulerian cycle in a directed graph using only $\mathrm{O}(n \lg m)$ bits of working space, where $n$ is the number of vertices and $m$ is the number of edges.

\subsection{Conceptual Algorithm}

Let $G$ be a directed Eulerian graph, and let $v_0$ be the starting vertex (it may be chosen arbitrarily). Unlike the classical approach, our algorithm traverses the graph by following \emph{incoming} edges rather than outgoing ones; that is, we perform a reverse traversal of the graph. This reversal allows the algorithm to compute the tour incrementally, in the correct order, without needing to store the entire tour and reverse it at the end.

We first describe the conceptual algorithm in terms of edge markings (colouring edges \textsc{Black}, \textsc{Red}, \textsc{Green}, or \textsc{Dashed}) to clarify the traversal and backtracking logic. 
Our implementation will represent these only implicitly. 

Initially, all edges are \textsc{Black}. 
The algorithm begins at an arbitrary starting vertex $v_0$ and follows \textsc{Black} incoming edges via a reverse traversal. When a vertex $v$ is reached for the first time in this traversal, via a \textsc{Black} edge $e=uv$, we colour $e$ \textsc{Red}. Otherwise, if we traverse an edge $e=uv$ to a vertex $v$ that had already been reached before, we colour $e$ \textsc{Green}.

\colorlet{teal}{green!50!black}
\tikzset{
	every picture/.style={xscale=1.2},
}
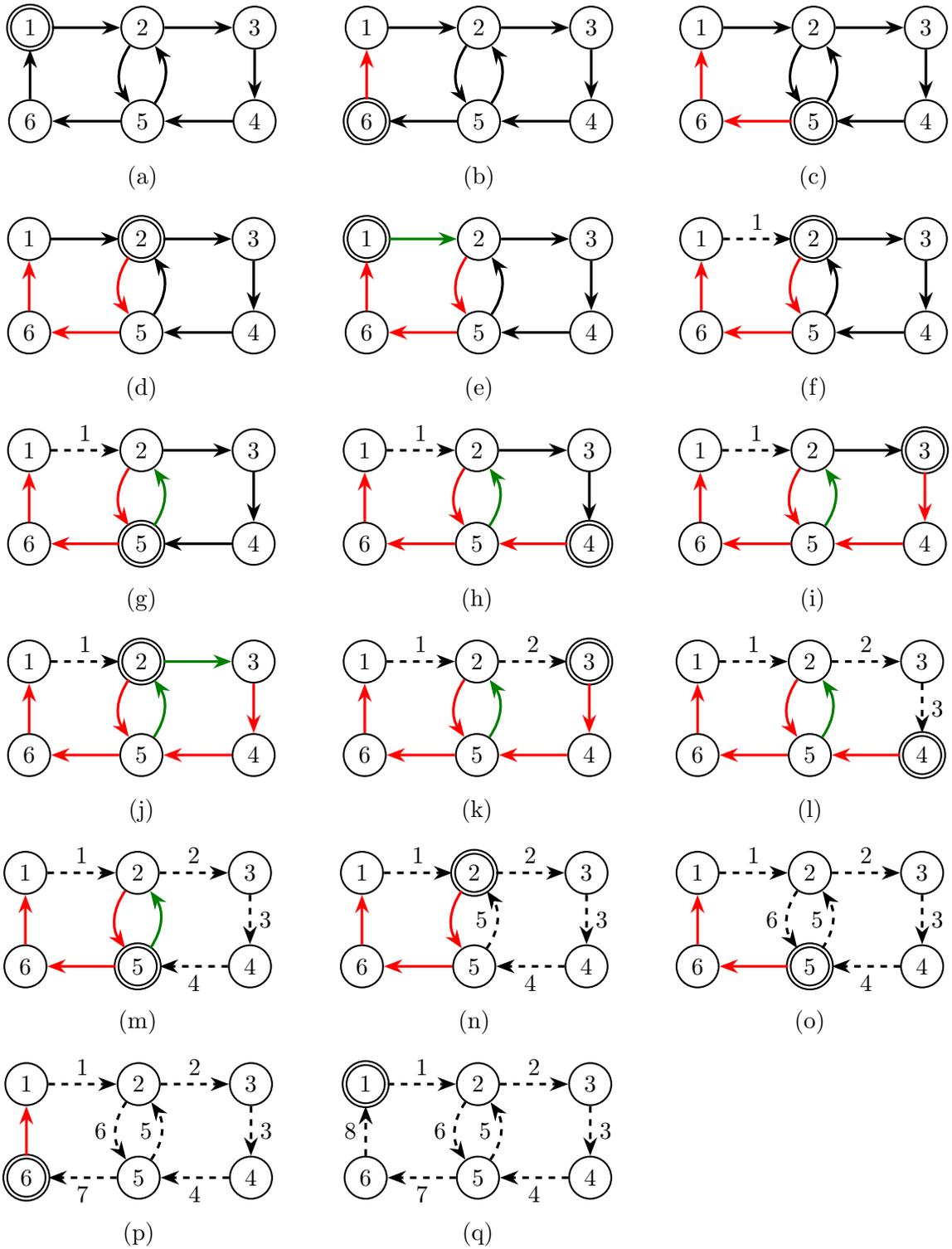
\begin{figure}[p]
    \centering
    \centering
\begin{tabular}{c@{\qquad}c@{\qquad}c}
 \begin{tikzpicture}[every edge/.style = {draw, -Stealth}]]
        \node(l) at (1.5,-2.4){(a)};
        \node[circle, double, double distance = 1pt, rounded corners=7, thick, draw=black, minimum size=5mm](1) at (0,0){$1$};
        \tikzstyle{every node}=[font=\ttfamily]
        \node[circle, rounded corners=7, thick, draw=black, minimum size=5mm](2) at (1.5,0){$2$};
        \tikzstyle{every node}=[font=\ttfamily]
        \node[circle, rounded corners=7, thick, draw=black, minimum size=5mm](3) at (3,0){$3$};
        \draw[->, thick];
        \node[circle, rounded corners=7, thick, draw=black, minimum size=5mm](4) at (3,-1.5){$4$};
        \node[circle, rounded corners=7, thick, draw=black, minimum size=5mm](5) at (1.5,-1.5){$5$};
        \node[circle, rounded corners=7, thick, draw=black, minimum size=5mm](6) at (0,-1.5){$6$};
        \draw[->, very thick] (1) edge (2) (2) edge (3) (3) edge (4) (4) edge (5) (5) edge (6) (6) edge (1) (2) edge[bend right] (5) (5) edge[bend right] (2);
    \end{tikzpicture} & \begin{tikzpicture}[every edge/.style = {draw, -Stealth}]]
        \node(l) at (1.5,-2.4){(b)};
        \node[circle, rounded corners=7, thick, draw=black, minimum size=5mm](1) at (0,0){$1$};
        \tikzstyle{every node}=[font=\ttfamily]
        \node[circle, rounded corners=7, thick, draw=black, minimum size=5mm](2) at (1.5,0){$2$};
        \tikzstyle{every node}=[font=\ttfamily]
        \node[circle, rounded corners=7, thick, draw=black, minimum size=5mm](3) at (3,0){$3$};
        \draw[->, thick];
        \node[circle, rounded corners=7, thick, draw=black, minimum size=5mm](4) at (3,-1.5){$4$};
        \node[circle, rounded corners=7, thick, draw=black, minimum size=5mm](5) at (1.5,-1.5){$5$};
        \node[circle, double, double distance = 1pt, rounded corners=7, thick, draw=black, minimum size=5mm](6) at (0,-1.5){$6$};
        \draw[->, very thick] (1) edge (2) (2) edge (3) (3) edge (4) (4) edge (5) (5) edge (6) (6) edge[red] (1) (2) edge[bend right] (5) (5) edge[bend right] (2);
    \end{tikzpicture} & \begin{tikzpicture}[every edge/.style = {draw, -Stealth}]]
        \node(l) at (1.5,-2.4){(c)};
        \node[circle, rounded corners=7, thick, draw=black, minimum size=5mm](1) at (0,0){$1$};
        \tikzstyle{every node}=[font=\ttfamily]
        \node[circle, rounded corners=7, thick, draw=black, minimum size=5mm](2) at (1.5,0){$2$};
        \tikzstyle{every node}=[font=\ttfamily]
        \node[circle, rounded corners=7, thick, draw=black, minimum size=5mm](3) at (3,0){$3$};
        \draw[->, thick];
        \node[circle, rounded corners=7, thick, draw=black, minimum size=5mm](4) at (3,-1.5){$4$};
        \node[circle, double, double distance = 1pt, rounded corners=7, thick, draw=black, minimum size=5mm](5) at (1.5,-1.5){$5$};
        \node[circle, rounded corners=7, thick, draw=black, minimum size=5mm](6) at (0,-1.5){$6$};
        \draw[->, very thick] (1) edge (2) (2) edge (3) (3) edge (4) (4) edge (5) (5) edge[red] (6) (6) edge[red] (1) (2) edge[bend right] (5) (5) edge[bend right] (2);
    \end{tikzpicture} \\ 
 \begin{tikzpicture}[every edge/.style = {draw, -Stealth}]]
        \node(l) at (1.5,-2.4){(d)};
        \node[circle, rounded corners=7, thick, draw=black, minimum size=5mm](1) at (0,0){$1$};
        \tikzstyle{every node}=[font=\ttfamily]
        \node[circle, double, double distance = 1pt, rounded corners=7, thick, draw=black, minimum size=5mm](2) at (1.5,0){$2$};
        \tikzstyle{every node}=[font=\ttfamily]
        \node[circle, rounded corners=7, thick, draw=black, minimum size=5mm](3) at (3,0){$3$};
        \draw[->, thick];
        \node[circle, rounded corners=7, thick, draw=black, minimum size=5mm](4) at (3,-1.5){$4$};
        \node[circle, rounded corners=7, thick, draw=black, minimum size=5mm](5) at (1.5,-1.5){$5$};
        \node[circle, rounded corners=7, thick, draw=black, minimum size=5mm](6) at (0,-1.5){$6$};
        \draw[->, very thick] (1) edge (2) (2) edge (3) (3) edge (4) (4) edge (5) (5) edge[red] (6) (6) edge[red] (1) (2) edge[bend right, red] (5) (5) edge[bend right] (2);
    \end{tikzpicture} & \begin{tikzpicture}[every edge/.style = {draw, -Stealth}]]
        \node(l) at (1.5,-2.4){(e)};
        \node[circle, double, double distance = 1pt, rounded corners=7, thick, draw=black, minimum size=5mm](1) at (0,0){$1$};
        \tikzstyle{every node}=[font=\ttfamily]
        \node[circle, rounded corners=7, thick, draw=black, minimum size=5mm](2) at (1.5,0){$2$};
        \tikzstyle{every node}=[font=\ttfamily]
        \node[circle, rounded corners=7, thick, draw=black, minimum size=5mm](3) at (3,0){$3$};
        \draw[->, thick];
        \node[circle, rounded corners=7, thick, draw=black, minimum size=5mm](4) at (3,-1.5){$4$};
        \node[circle, rounded corners=7, thick, draw=black, minimum size=5mm](5) at (1.5,-1.5){$5$};
        \node[circle, rounded corners=7, thick, draw=black, minimum size=5mm](6) at (0,-1.5){$6$};
        \draw[->, very thick] (1) edge[teal] (2) (2) edge (3) (3) edge (4) (4) edge (5) (5) edge[red] (6) (6) edge[red] (1) (2) edge[bend right, red] (5) (5) edge[bend right] (2);
    \end{tikzpicture} & \begin{tikzpicture}[every edge/.style = {draw, -Stealth}]]
        \node(l) at (1.5,-2.4){(f)};
        \node[circle, rounded corners=7, thick, draw=black, minimum size=5mm](1) at (0,0){$1$};
        \node[circle, double, double distance = 1pt, rounded corners=7, thick, draw=black, minimum size=5mm](2) at (1.5,0){$2$};
        \node[circle, rounded corners=7, thick, draw=black, minimum size=5mm](3) at (3,0){$3$};
        \draw[->, thick];
        \node[circle, rounded corners=7, thick, draw=black, minimum size=5mm](4) at (3,-1.5){$4$};
        \node[circle, rounded corners=7, thick, draw=black, minimum size=5mm](5) at (1.5,-1.5){$5$};
        \node[circle, rounded corners=7, thick, draw=black, minimum size=5mm](6) at (0,-1.5){$6$};
        \draw[->, very thick] (1) edge[dashed, "1"] (2) (2) edge (3) (3) edge (4) (4) edge (5) (5) edge[red] (6) (6) edge[red] (1) (2) edge[bend right, red] (5) (5) edge[bend right] (2);
    \end{tikzpicture} \\
 \begin{tikzpicture}[every edge/.style = {draw, -Stealth}]]
        \node(l) at (1.5,-2.4){(g)};
        \node[circle, rounded corners=7, thick, draw=black, minimum size=5mm](1) at (0,0){$1$};
        \node[circle, rounded corners=7, thick, draw=black, minimum size=5mm](2) at (1.5,0){$2$};
        \node[circle, rounded corners=7, thick, draw=black, minimum size=5mm](3) at (3,0){$3$};
        \draw[->, thick];
        \node[circle, rounded corners=7, thick, draw=black, minimum size=5mm](4) at (3,-1.5){$4$};
        \node[circle, rounded corners=7, double, double distance = 1pt,  thick, draw=black, minimum size=5mm](5) at (1.5,-1.5){$5$};
        \node[circle, rounded corners=7, thick, draw=black, minimum size=5mm](6) at (0,-1.5){$6$};
        \draw[->, very thick] (1) edge[dashed, "1"] (2) (2) edge (3) (3) edge (4) (4) edge (5) (5) edge[red] (6) (6) edge[red] (1) (2) edge[bend right, red] (5) (5) edge[bend right, teal] (2);
    \end{tikzpicture} & \begin{tikzpicture}[every edge/.style = {draw, -Stealth}]]
        \node(l) at (1.5,-2.4){(h)};
        \node[circle, rounded corners=7, thick, draw=black, minimum size=5mm](1) at (0,0){$1$};
        \node[circle, rounded corners=7, thick, draw=black, minimum size=5mm](2) at (1.5,0){$2$};
        \node[circle, rounded corners=7, thick, draw=black, minimum size=5mm](3) at (3,0){$3$};
        \draw[->, thick];
        \node[circle, rounded corners=7, double, double distance = 1pt, thick, draw=black, minimum size=5mm](4) at (3,-1.5){$4$};
        \node[circle, rounded corners=7, thick, draw=black, minimum size=5mm](5) at (1.5,-1.5){$5$};
        \node[circle, rounded corners=7, thick, draw=black, minimum size=5mm](6) at (0,-1.5){$6$};
        \draw[->, very thick] (1) edge[dashed, "1"] (2) (2) edge (3) (3) edge (4) (4) edge[red] (5) (5) edge[red] (6) (6) edge[red] (1) (2) edge[bend right, red] (5) (5) edge[bend right, teal] (2);
    \end{tikzpicture} & \begin{tikzpicture}[every edge/.style = {draw, -Stealth}]]
        \node(l) at (1.5,-2.4){(i)};
        \node[circle, rounded corners=7, thick, draw=black, minimum size=5mm](1) at (0,0){$1$};
        \node[circle, rounded corners=7, thick, draw=black, minimum size=5mm](2) at (1.5,0){$2$};
        \node[circle, rounded corners=7, double, double distance = 1pt, thick, draw=black, minimum size=5mm](3) at (3,0){$3$};
        \draw[->, thick];
        \node[circle, rounded corners=7, thick, draw=black, minimum size=5mm](4) at (3,-1.5){$4$};
        \node[circle, rounded corners=7, thick, draw=black, minimum size=5mm](5) at (1.5,-1.5){$5$};
        \node[circle, rounded corners=7, thick, draw=black, minimum size=5mm](6) at (0,-1.5){$6$};
        \draw[->, very thick] (1) edge[dashed, "1"] (2) (2) edge (3) (3) edge[red] (4) (4) edge[red] (5) (5) edge[red] (6) (6) edge[red] (1) (2) edge[bend right, red] (5) (5) edge[bend right, teal] (2);
    \end{tikzpicture} \\
 \begin{tikzpicture}[every edge/.style = {draw, -Stealth}]]
        \node(l) at (1.5,-2.4){(j)};
        \node[circle, rounded corners=7, thick, draw=black, minimum size=5mm](1) at (0,0){$1$};
        \node[circle, rounded corners=7, double, double distance = 1pt, thick, draw=black, minimum size=5mm](2) at (1.5,0){$2$};
        \node[circle, rounded corners=7, thick, draw=black, minimum size=5mm](3) at (3,0){$3$};
        \draw[->, thick];
        \node[circle, rounded corners=7, thick, draw=black, minimum size=5mm](4) at (3,-1.5){$4$};
        \node[circle, rounded corners=7, thick, draw=black, minimum size=5mm](5) at (1.5,-1.5){$5$};
        \node[circle, rounded corners=7, thick, draw=black, minimum size=5mm](6) at (0,-1.5){$6$};
        \draw[->, very thick] (1) edge[dashed, "1"] (2) (2) edge[teal] (3) (3) edge[red] (4) (4) edge[red] (5) (5) edge[red] (6) (6) edge[red] (1) (2) edge[bend right, red] (5) (5) edge[bend right, teal] (2);
    \end{tikzpicture} & \begin{tikzpicture}[every edge/.style = {draw, -Stealth}]]
        \node(l) at (1.5,-2.4){(k)};
        \node[circle, rounded corners=7, thick, draw=black, minimum size=5mm](1) at (0,0){$1$};
        \node[circle, rounded corners=7, thick, draw=black, minimum size=5mm](2) at (1.5,0){$2$};
        \node[circle, rounded corners=7, double, double distance = 1pt, thick, draw=black, minimum size=5mm](3) at (3,0){$3$};
        \draw[->, thick];
        \node[circle, rounded corners=7, thick, draw=black, minimum size=5mm](4) at (3,-1.5){$4$};
        \node[circle, rounded corners=7, thick, draw=black, minimum size=5mm](5) at (1.5,-1.5){$5$};
        \node[circle, rounded corners=7, thick, draw=black, minimum size=5mm](6) at (0,-1.5){$6$};
        \draw[->, very thick] (1) edge[dashed, "1"] (2) (2) edge[dashed, "2"] (3) (3) edge[red] (4) (4) edge[red] (5) (5) edge[red] (6) (6) edge[red] (1) (2) edge[bend right, red] (5) (5) edge[bend right, teal] (2);
    \end{tikzpicture} & \begin{tikzpicture}[every edge/.style = {draw, -Stealth}]]
        \node(l) at (1.5,-2.4){(l)};
        \node[circle, rounded corners=7, thick, draw=black, minimum size=5mm](1) at (0,0){$1$};
        \node[circle, rounded corners=7, thick, draw=black, minimum size=5mm](2) at (1.5,0){$2$};
        \node[circle, rounded corners=7, thick, draw=black, minimum size=5mm](3) at (3,0){$3$};
        \draw[->, thick];
        \node[circle, rounded corners=7, double, double distance = 1pt, thick, draw=black, minimum size=5mm](4) at (3,-1.5){$4$};
        \node[circle, rounded corners=7, thick, draw=black, minimum size=5mm](5) at (1.5,-1.5){$5$};
        \node[circle, rounded corners=7, thick, draw=black, minimum size=5mm](6) at (0,-1.5){$6$};
        \draw[->, very thick] (1) edge[dashed, "1"] (2) (2) edge[dashed, "2"] (3) (3) edge[dashed, "3"] (4) (4) edge[red] (5) (5) edge[red] (6) (6) edge[red] (1) (2) edge[bend right, red] (5) (5) edge[bend right, teal] (2);
    \end{tikzpicture} \\
 \begin{tikzpicture}[every edge/.style = {draw, -Stealth}]]
        \node(l) at (1.5,-2.4){(m)};
        \node[circle, rounded corners=7, thick, draw=black, minimum size=5mm](1) at (0,0){$1$};
        \node[circle, rounded corners=7, thick, draw=black, minimum size=5mm](2) at (1.5,0){$2$};
        \node[circle, rounded corners=7, thick, draw=black, minimum size=5mm](3) at (3,0){$3$};
        \draw[->, thick];
        \node[circle, rounded corners=7, thick, draw=black, minimum size=5mm](4) at (3,-1.5){$4$};
        \node[circle, rounded corners=7, double, double distance = 1pt, thick, draw=black, minimum size=5mm](5) at (1.5,-1.5){$5$};
        \node[circle, rounded corners=7, thick, draw=black, minimum size=5mm](6) at (0,-1.5){$6$};
        \draw[->, very thick] (1) edge[dashed, "1"] (2) (2) edge[dashed, "2"] (3) (3) edge[dashed, "3"] (4) (4) edge[dashed, "4"] (5) (5) edge[red] (6) (6) edge[red] (1) (2) edge[bend right, red] (5) (5) edge[bend right, teal] (2);
    \end{tikzpicture} & \begin{tikzpicture}[every edge/.style = {draw, -Stealth}]]
        \node(l) at (1.5,-2.4){(n)};
        \node[circle, rounded corners=7, thick, draw=black, minimum size=5mm](1) at (0,0){$1$};
        \node[circle, rounded corners=7, double, double distance = 1pt,thick, draw=black, minimum size=5mm](2) at (1.5,0){$2$};
        \node[circle, rounded corners=7, thick, draw=black, minimum size=5mm](3) at (3,0){$3$};
        \draw[->, thick];
        \node[circle, rounded corners=7, thick, draw=black, minimum size=5mm](4) at (3,-1.5){$4$};
        \node[circle, rounded corners=7, thick, draw=black, minimum size=5mm](5) at (1.5,-1.5){$5$};
        \node[circle, rounded corners=7, thick, draw=black, minimum size=5mm](6) at (0,-1.5){$6$};
        \draw[->, very thick] (1) edge[dashed, "1"] (2) (2) edge[dashed, "2"] (3) (3) edge[dashed, "3"] (4) (4) edge[dashed, "4"] (5) (5) edge[red] (6) (6) edge[red] (1) (2) edge[bend right, red] (5) (5) edge[bend right, dashed, "5"] (2);
    \end{tikzpicture} &
 \begin{tikzpicture}[every edge/.style = {draw, -Stealth}]]
        \node(l) at (1.5,-2.4){(o)};
        \node[circle, rounded corners=7, thick, draw=black, minimum size=5mm](1) at (0,0){$1$};
        \node[circle, rounded corners=7, thick, draw=black, minimum size=5mm](2) at (1.5,0){$2$};
        \node[circle, rounded corners=7, thick, draw=black, minimum size=5mm](3) at (3,0){$3$};
        \draw[->, thick];
        \node[circle, rounded corners=7, thick, draw=black, minimum size=5mm](4) at (3,-1.5){$4$};
        \node[circle, rounded corners=7, double, double distance = 1pt,thick, draw=black, minimum size=5mm](5) at (1.5,-1.5){$5$};
        \node[circle, rounded corners=7, thick, draw=black, minimum size=5mm](6) at (0,-1.5){$6$};
        \draw[->, very thick] (1) edge[dashed, "1"] (2) (2) edge[dashed, "2"] (3) (3) edge[dashed, "3"] (4) (4) edge[dashed, "4"] (5) (5) edge[red] (6) (6) edge[red] (1) (2) edge[bend right, dashed] node[left]{6} (5) (5) edge[bend right, dashed, "5"] (2);
    \end{tikzpicture} \\
 \begin{tikzpicture}[every edge/.style = {draw, -Stealth}]]
        \node(l) at (1.5,-2.4){(p)};
        \node[circle, rounded corners=7, thick, draw=black, minimum size=5mm](1) at (0,0){$1$};
        \node[circle, rounded corners=7, thick, draw=black, minimum size=5mm](2) at (1.5,0){$2$};
        \node[circle, rounded corners=7, thick, draw=black, minimum size=5mm](3) at (3,0){$3$};
        \draw[->, thick];
        \node[circle, rounded corners=7, thick, draw=black, minimum size=5mm](4) at (3,-1.5){$4$};
        \node[circle, rounded corners=7, thick, draw=black, minimum size=5mm](5) at (1.5,-1.5){$5$};
        \node[circle, rounded corners=7, double, double distance = 1pt,thick, draw=black, minimum size=5mm](6) at (0,-1.5){$6$};
        \draw[->, very thick] (1) edge[dashed, "1"] (2) (2) edge[dashed, "2"] (3) (3) edge[dashed, "3"] (4) (4) edge[dashed, "4"] (5) (5) edge[dashed, "7"] (6) (6) edge[red] (1) (2) edge[bend right, dashed] node[left]{6} (5) (5) edge[bend right, dashed, "5"] (2);
    \end{tikzpicture} &
 \begin{tikzpicture}[every edge/.style = {draw, -Stealth}]]
        \node(l) at (1.5,-2.4){(q)};
        \node[circle, rounded corners=7, double, double distance = 1pt,thick, draw=black, minimum size=5mm](1) at (0,0){$1$};
        \node[circle, rounded corners=7, thick, draw=black, minimum size=5mm](2) at (1.5,0){$2$};
        \node[circle, rounded corners=7, thick, draw=black, minimum size=5mm](3) at (3,0){$3$};
        \draw[->, thick];
        \node[circle, rounded corners=7, thick, draw=black, minimum size=5mm](4) at (3,-1.5){$4$};
        \node[circle, rounded corners=7, thick, draw=black, minimum size=5mm](5) at (1.5,-1.5){$5$};
        \node[circle, rounded corners=7, thick, draw=black, minimum size=5mm](6) at (0,-1.5){$6$};
        \draw[->, very thick] (1) edge[dashed, "1"] (2) (2) edge[dashed, "2"] (3) (3) edge[dashed, "3"] (4) (4) edge[dashed, "4"] (5) (5) edge[dashed, "7"] (6) (6) edge[dashed, "8"] (1) (2) edge[bend right, dashed] node[left]{6} (5) (5) edge[bend right, dashed, "5"] (2);
    \end{tikzpicture} &
\end{tabular}
\caption{Sample execution of \textsc{Space-Efficient-Hierholzer}. For clarity, \textsc{Dashed} edges are labelled in the order in which they are written. The current vertex is represented by double borders.}
    \label{fig:exec}
\end{figure}

When we reach a vertex $v$ without \textsc{Black} incoming edges, we backtrack from it via an \emph{outgoing} edge according to the following rule:
\begin{enumerate}
    \item If $v$ has an outgoing \textsc{Green} edge, backtrack along it;
    \item otherwise, if $v \ne v_0$, backtrack via its (unique) \textsc{Red} outgoing edge;
    \item otherwise, if $v = v_0$, terminate the algorithm.
\end{enumerate}
Each time we backtrack across an edge (irrespective of whether it was \textsc{Green} or \textsc{Red} before), we mark it \textsc{Dashed} and we output it as the next edge of the Eulerian cycle. 
In this way, the Eulerian cycle is produced sequentially, in the correct order. 
Figure~\ref{fig:exec} shows a sample execution of the algorithm. 
Notice that, at step (m) of the figure, the traversal proceeds to vertex $2$ instead of vertex $6$; this is due to the backtracking rules, which prioritise \textsc{Green} edges over \textsc{Red} ones. 
Indeed, if the traversal proceeded to vertex $6$ instead, it would get stuck at vertex $1$, and the two edges between vertices $2$ and $5$ would never be counted as part of the computed cycle. 
The backtracking rules ensure that such a scenario does not happen.

\subsection{Space-Efficient Implementation}

\begin{algorithm}[tbh]
\caption{\textsc{Space-Efficient-Hierholzer}}\label{alg:euler-tour}
\hspace*{\algorithmicindent} \textbf{Input:} An Eulerian directed graph $G$ with $n$ vertices and $m$ edges.\\
\hspace*{\algorithmicindent} \textbf{Output:} An Eulerian cycle, incrementally written.
\begin{algorithmic}[1]
\State $\texttt{next}[1..n],~\texttt{visited}[1..n],~\texttt{skipped}[1..n],~B[1..n] \gets [0]^n$
\State $c \gets 0$
\State $v_0 \gets k \in [n]$ \Comment{Pick an arbitrary starting vertex.}
\State $u \gets v_0$ \Comment{Initially, the current vertex is the starting one.}
\State $\texttt{visited}[u] \gets 1$
\While{$c < m$} \Comment{We loop as long as there are still edges to write.}
    \State $\texttt{next}[u] \gets \texttt{next}[u] + 1$
    \State $i \gets \texttt{next}[u]$
    \If{$i \leq d^-(u)$} \Comment{Reverse traversal phase.}
        \State $v \gets \Gamma^-(u, i)$
        \If{$\texttt{visited}[v] = 0$}
            \If{$v \ne v_0$}
                \State $B[v] \gets u$
            \EndIf
            \State $\texttt{visited}[v] \gets 1$
        \EndIf
        \State $u \gets v$
    \Else \Comment{Backtracking phase.}
        \State $i \gets \texttt{next}[u] - d^-(u)$
        \If{$\Gamma^+(u, i) = B[u]$ \textbf{and} $\texttt{skipped}[u] = 0$}
            \State $\texttt{skipped}[u] \gets 1$
            \State $\texttt{next}[u] \gets \texttt{next}[u] + 1$
            \State $i \gets i+1$
        \EndIf
        \If{$i > d^+(u)$}
            \State $v \gets B[u]$
        \Else
            \State $v \gets \Gamma^+(u, i)$
        \EndIf
        \State \textsc{Write}$(uv)$
        \State $c \gets c + 1$
        \State $u \gets v$
    \EndIf
\EndWhile
\end{algorithmic}
\end{algorithm}

We now describe our space-efficient implementation of the conceptual algorithm, see Algorithm~\ref{alg:euler-tour}.
For each vertex $u$, the algorithm uses a counter $\texttt{next}[u]$ to keep track of how many of its incoming and outgoing edges have been traversed. 
The bit vector \texttt{visited} stores whether a vertex has been seen before, and \texttt{skipped} ensures that the special backtracking edge (the \textsc{Red} outgoing edge) is not used until all other outgoing edges have been used. 

When a vertex $v$ is visited for the first time (via an outgoing edge $vu$), the algorithm records $u$ as $v$'s predecessor by setting $B[v] \gets u$; this is equivalent to marking $uv$ \textsc{Red}. 
Backtracking begins when a vertex $u$ has no unexplored incoming (\textsc{Black}) edges, which occurs when $\texttt{next}[u] > d^-(u)$ (i.e.\ when $\texttt{next}[u]$ is out of the bounds of $d^-(u)$). 
At that point, the algorithm starts following outgoing edges from $u$. To ensure the \textsc{Red} edge is used last, it checks whether the next candidate edge (indexed by $\texttt{next}[u] - d^-(u)$) leads to $B[u]$. If so, and if it has not already been skipped, the algorithm skips this edge. 
This guarantees that all non-\textsc{Red} (i.e.\ \textsc{Green}) edges are backtracked over first.%
\footnote{%
	Since the graph may contain parallel edges, the target vertex $v=B[u]$ could appear multiple times in the multiset $\{\Gamma^+(u,i):i \in [d^+(u)]\}$. Hence we use \texttt{skipped}-array to only skip the first edge $uv$.
}
Finally, after all other outgoing edges have been used (i.e.\ when $\texttt{next}[u]>d^+(u)+d^-(u)=d(u)$), the \textsc{Red} edge becomes the only option and is followed. 
As the algorithm backtracks from vertices, each traversed edge is immediately written to the output.

\section{Analysis}
\label{s:analysis}

We now analyse the correctness and complexity of the algorithm. In Subsection \ref{ss:correctness}, we mainly use the high-level abstraction involving edge marks (\textsc{Black}, \textsc{Red}, \textsc{Green}, and \textsc{Dashed}) to prove that the algorithm outputs a valid Eulerian cycle. In Subsection \ref{ss:complexity}, we analyse the running time and space usage of the algorithm by referring directly to the pseudocode in Algorithm~\ref{alg:euler-tour}.

\subsection{Proof of Correctness}
\label{ss:correctness}
For a vertex $v$ and a mark $X$, let $d^+(v, X)$ denote the number of outgoing edges from $v$ marked $X$, and define $d^-(v, X)$ analogously as the number of incoming edges to $v$ marked $X$. 

\begin{definition}[Needle vertex]
\label{d:needle-vertex}
    Let $u$ denote the current vertex of the traversal. A vertex $w$ is a \emph{needle vertex} if and only if:
    \begin{itemize}
        \item $w = u$ and $d^+(w, \textsc{Black}) = d^-(w, \textsc{Black})$; or 
        \item $w \neq u$ and $d^+(w, \textsc{Black}) = d^-(w, \textsc{Black}) + 1$.
    \end{itemize}
\end{definition}

Our first claim and corollary establish a crucial invariant of \textsc{Space-Efficient-Hierholzer} by showing that there is exactly one needle vertex throughout the execution of the algorithm. Our follow-up claim states that backtracking occurs from $u$ only when $u$ is a needle vertex.

\begin{lemma}
    \label{c:2-statements}
    At every step of the execution of \textsc{Space-Efficient-Hierholzer}, exactly one of the following statements holds:
    \begin{enumerate}
        \item for all $v \in V$, $d^+(v, \textsc{Black}) = d^-(v, \textsc{Black})$;
        \item there are two vertices $a$ and $b$ s.t.: 
        \begin{itemize}
            \item $d^+(a, \textsc{Black}) = d^-(a, \textsc{Black})+1$ and $d^+(b, \textsc{Black})+1 = d^-(b, \textsc{Black})$,
            \item for all $v \in V \setminus \{a,b\}$, $d^+(v, \textsc{Black}) = d^-(v, \textsc{Black})$, and
            \item $b$ is the current vertex.
        \end{itemize}
    \end{enumerate}
\end{lemma}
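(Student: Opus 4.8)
The plan is to prove the statement by induction on the number of steps the algorithm has executed, tracking for each vertex $v$ the \emph{black imbalance} $f(v) := d^+(v, \textsc{Black}) - d^-(v, \textsc{Black})$ together with the identity of the current vertex $u$. In this language, statement~1 asserts $f \equiv 0$, while statement~2 asserts that $f$ takes the value $+1$ at some vertex $a$, the value $-1$ at the current vertex $b = u$, and $0$ elsewhere. Since the two statements are clearly mutually exclusive (one forbids any imbalance, the other requires exactly two imbalanced vertices), it suffices to show that one of them holds after every step.

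First I would record how a single step changes $f$ and $u$. Only reverse-traversal steps recolour a \textsc{Black} edge: such a step follows a \textsc{Black} in-edge $vu$ of the current vertex $u$ to its tail $v$, recolouring it \textsc{Red} or \textsc{Green}. This decrements $d^-(u,\textsc{Black})$ and $d^+(v,\textsc{Black})$ by one each and moves the current vertex to $v$; hence $f(u)$ increases by one and $f(v)$ decreases by one (with no net change when $v = u$, i.e.\ on a self-loop). Backtracking steps only recolour \textsc{Red}/\textsc{Green} edges \textsc{Dashed}, leaving every $f(\cdot)$ unchanged while advancing $u$ along the traversed edge. Finally, by the conceptual rule a step from $u$ is a reverse step precisely when $d^-(u,\textsc{Black}) > 0$ and a backtracking step precisely when $d^-(u,\textsc{Black}) = 0$; this is the only place I rely on the algorithm's branching condition.

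With these facts the induction is a short case analysis. Initially all edges are \textsc{Black} and $G$ is Eulerian, so $f \equiv 0$ and statement~1 holds. For the inductive step, suppose the invariant holds with current vertex $u$. If statement~1 holds, then $f(u) = 0$; a reverse step (available when $d^-(u,\textsc{Black}) > 0$) sends $f(u) \to +1$ and $f(v) \to -1$ with new current $v$, which is statement~2 with $a = u,\ b = v$ when $v \ne u$, and remains statement~1 on a self-loop; a backtracking step changes no $f$ and preserves statement~1. The decisive observation is that \emph{in statement~2 the current vertex $b$ is forced to take a reverse step}, since $f(b) = -1$ gives $d^-(b,\textsc{Black}) = d^+(b,\textsc{Black}) + 1 \ge 1$. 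Such a reverse step raises $f(b)$ from $-1$ to $0$ and lowers $f(v)$ by one: if $v = a$ we reach $f \equiv 0$ (statement~1); if $v \notin \{a,b\}$ we obtain statement~2 with the surplus vertex $a$ unchanged and the new deficit vertex $v$ as current; and a self-loop ($v = b$) leaves statement~2 intact.

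I expect the main obstacle to be bookkeeping the transitions cleanly rather than any deep difficulty: one must check that the surplus vertex $a$ is never the current vertex (so that backtracking can never be triggered while an imbalance persists), handle self-loops as separate sub-cases in both statements, and correctly identify the cycle-closing step $v = a$ that returns the configuration to statement~1. Once the effect of a single step on $f$ and on $u$ is pinned down, every case reduces to checking which of the two $\pm 1$ values is created, cancelled, or moved.
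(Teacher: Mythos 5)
Your proposal is correct and takes essentially the same approach as the paper's proof: induction on the number of edge traversals, with the same case analysis (under statement~1, a forward step creates the $\pm 1$ pair while a backtracking step changes nothing; under statement~2, the current vertex $b$ is forced to step forward, which either cancels the imbalance at $a$ or moves the deficit to the new current vertex). Your imbalance function $f$ is merely notational bookkeeping for that argument; if anything, your explicit handling of self-loops ($v=u$ and $v=b$) covers sub-cases that the paper's write-up silently glosses over.
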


\begin{proof}
    We prove this claim by induction on the number of traversed edges. The base case is trivial, as no edges have been traversed. Hence, for all vertices $v \in V$, we have $d^+(v, \textsc{Black}) = d^-(v, \textsc{Black})$, and the invariant is satisfied.

    Now, suppose the invariant holds after $k$ edge traversals. We consider the $(k+1)$st edge traversal, which must either be a forward step over a \textsc{Black} incoming edge, or a backtracking step over a \textsc{Red} or \textsc{Green} outgoing edge, from some vertex $u$.

    \paragraph{Case 1.}
        \textit{Statement 1 holds after $k$ edge traversals.} Let $u$ be the current vertex. By the induction hypothesis, we have $d^+(u, \textsc{Black}) = d^-(u, \textsc{Black})$. If $d^+(u, \textsc{Black}) = 0$, the traversal backtracks via an outgoing \textsc{Red} or \textsc{Green} edge, meaning that neither $d^+ (v,\textsc{Black})$ nor $d^-(v,\textsc{Black})$ of any vertex $v \in V$ is affected, preserving the invariant. Otherwise, if $d^+(u, \textsc{Black}) > 0$, the traversal walks forward via one incoming \textsc{Black} edge and marks it either \textsc{Red} or \textsc{Green}. Let $w$ be the new current vertex. Vertex $u$ loses one \textsc{Black} incoming edge, and $w$ loses one \textsc{Black} outgoing edge; thus, we have $d^+(u, \textsc{Black}) = d^-(u, \textsc{Black})+1$ and $d^+(w, \textsc{Black})+1 = d^-(w, \textsc{Black})$, the measure of \textsc{Black} edges is not affected for vertices in $V \setminus \{u,w\}$, and $w$ becomes the new current vertex, satisfying the invariant.
    \paragraph{Case 2.}
        \textit{Statement 2 holds after $k$ edge traversals.} We then have $u=b$. Since $d^+(b, \textsc{Black})+1 = d^-(b, \textsc{Black})$, the traversal then proceeds by walking forward towards some vertex $x$. Consider two possible subcases.
            \paragraph{Subcase 2.1.}
            \textit{$x=a$.} By entering $a$ from $b$, $b$ loses one \textsc{Black} incoming edge, $a$ loses one \textsc{Black} outgoing edge, and thus we have $d^+(v, \textsc{Black}) = d^-(v, \textsc{Black})$ for all $v \in V$.
            \paragraph{Subcase 2.2.} \textit{$x \neq a.$} Before entering $x$ from $b$, we have $d^+(x, \textsc{Black}) = d^-(x, \textsc{Black})$; upon entering it, we now have $d^+(x, \textsc{Black})+1 = d^-(x, \textsc{Black})$, $d^+(b, \textsc{Black}) = d^-(b, \textsc{Black})$, $d^+(a, \textsc{Black}) = d^-(a, \textsc{Black})+1$, and $x$ is the new current vertex.
\end{proof}

The following corollary immediately follows from Lemma~\ref{c:2-statements} and Definition~\ref{d:needle-vertex}.

\begin{corollary}
\label{cor:one-needle}
At every step of the execution of \textsc{Space-Efficient-Hierholzer}, there is exactly one needle vertex.
\end{corollary}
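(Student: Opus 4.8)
The plan is to derive the corollary directly from Lemma~\ref{c:2-statements} by a two-way case split on which of its two statements holds at the current step, and in each case to verify against Definition~\ref{d:needle-vertex} that exactly one vertex qualifies as a needle vertex. Since the lemma asserts that precisely one of its statements holds at every step, the two cases are exhaustive and mutually exclusive, so it suffices to exhibit a unique needle vertex in each.

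First I would treat the case where Statement~1 holds, i.e.\ $d^+(v,\textsc{Black}) = d^-(v,\textsc{Black})$ for every $v \in V$. Let $u$ be the current vertex. Then $u$ matches the first bullet of Definition~\ref{d:needle-vertex} and is therefore a needle vertex. For any other vertex $w \neq u$, the only applicable criterion is the second bullet, which requires $d^+(w,\textsc{Black}) = d^-(w,\textsc{Black})+1$; but balance gives $d^+(w,\textsc{Black}) = d^-(w,\textsc{Black})$, so this fails. Hence $u$ is the unique needle vertex.

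Next I would treat the case where Statement~2 holds, with distinguished vertices $a$ and $b$, where $b$ is the current vertex $u$. Here I would check that $b$ fails both bullets: it is the current vertex but satisfies $d^+(b,\textsc{Black}) = d^-(b,\textsc{Black})-1 \neq d^-(b,\textsc{Black})$, so the first bullet fails, while the second bullet is explicitly restricted to vertices distinct from $u$ and so does not apply to $b$. The vertex $a$, on the other hand, satisfies $a \neq u$ together with $d^+(a,\textsc{Black}) = d^-(a,\textsc{Black})+1$, matching the second bullet, so $a$ is a needle vertex. Every remaining vertex $v \neq a,b$ is balanced and distinct from $u$, and thus fails the second bullet exactly as above. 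Hence $a$ is the unique needle vertex.

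I expect no real obstacle, as the argument is a routine unfolding of the definition against the lemma's case distinction. The only point requiring a moment's care is the exclusion of the current vertex $b$ in Statement~2: one must recall that the second bullet of Definition~\ref{d:needle-vertex} is restricted to $w \neq u$, so that the \textsc{Black}-degree imbalance at $b$ does not accidentally promote it to a needle vertex. Confirming this restriction closes the case and completes the proof.
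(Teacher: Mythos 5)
Your proof is correct and matches the paper's intent exactly: the paper declares the corollary to follow immediately from Lemma~\ref{c:2-statements} and Definition~\ref{d:needle-vertex}, and your two-case unfolding (current vertex is the unique needle under Statement~1; the vertex $a$ is the unique needle under Statement~2, with $b$ excluded because the second bullet requires $w \neq u$) is precisely the routine verification the paper leaves implicit.
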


\begin{lemma}
\label{c:backtrack-needle}
    At every step of the execution of \textsc{Space-Efficient-Hierholzer}, backtracking occurs from $u$ only when $u$ is a needle vertex.
\end{lemma}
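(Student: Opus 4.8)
The plan is to reduce the statement to a direct application of Lemma~\ref{c:2-statements}. First I would pin down precisely what ``backtracking occurs from $u$'' means at the level of edge marks. By the conceptual description, the algorithm backtracks from the current vertex $u$ exactly when $u$ has no remaining \textsc{Black} incoming edges, i.e.\ when $d^-(u, \textsc{Black}) = 0$. (At the level of the pseudocode, this is the trigger $\texttt{next}[u] > d^-(u)$; one checks that each increment of $\texttt{next}[u]$ during the reverse-traversal phase marks one further incoming edge of $u$ non-\textsc{Black}, so that the phase transition occurs precisely when all $d^-(u)$ incoming edges have been marked.) Hence it suffices to show that whenever $u$ is the current vertex and $d^-(u, \textsc{Black}) = 0$, the vertex $u$ satisfies the needle condition, which for the current vertex reads $d^+(u, \textsc{Black}) = d^-(u, \textsc{Black})$.

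Next I would invoke the dichotomy of Lemma~\ref{c:2-statements}, applied at the moment backtracking begins. If Statement~1 holds, then $d^+(v, \textsc{Black}) = d^-(v, \textsc{Black})$ for every vertex $v$, and in particular $d^+(u, \textsc{Black}) = d^-(u, \textsc{Black}) = 0$; thus the current vertex $u$ meets the needle condition and we are done. The remaining work is to rule out Statement~2.

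To rule out Statement~2, I would use that its current vertex is the vertex $b$ with $d^+(b, \textsc{Black}) + 1 = d^-(b, \textsc{Black})$. Since backtracking occurs from the current vertex $u$, we have $u = b$, and the backtracking condition gives $d^-(u, \textsc{Black}) = 0$. But then $d^+(b, \textsc{Black}) + 1 = 0$, which is impossible as out-degrees are non-negative. Hence Statement~2 cannot hold at a backtracking step, so Statement~1 must hold, completing the argument.

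I expect the only genuinely delicate point to be this first reduction: making the identification between the operational backtracking trigger ($\texttt{next}[u] > d^-(u)$) and the mark-level condition ($d^-(u, \textsc{Black}) = 0$) airtight, since everything after it is a two-line case analysis on Lemma~\ref{c:2-statements}. Once that bridge is in place, the needle property of the current vertex during backtracking follows immediately, and Corollary~\ref{cor:one-needle} then also tells us that $u$ is in fact the \emph{unique} needle vertex at that moment.
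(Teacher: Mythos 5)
Your proof is correct and takes essentially the same route as the paper: both rest on the dichotomy of Lemma~\ref{c:2-statements}, ruling out Statement~2 because backtracking requires $d^-(u,\textsc{Black})=0$ while Statement~2 forces a \textsc{Black} incoming edge at the current vertex, leaving Statement~1, under which the current vertex satisfies the needle condition. The only difference is presentational — the paper frames it as a contradiction with the unique needle vertex of Corollary~\ref{cor:one-needle}, whereas you argue directly and invoke uniqueness only as a closing remark — but the mathematical content is identical.
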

\begin{proof}
    It is established in Corollary~\ref {cor:one-needle} that exactly one needle vertex exists in the graph throughout the execution of the algorithm; thus, consider $w$ to be the needle vertex, and for the sake of contradiction, suppose that the algorithm backtracks from $u \neq w$. By Lemma~\ref{c:2-statements}, we arrive at two distinct cases.
    \paragraph{Case 1.}
        \textit{Statement 1 of Lemma~\ref{c:2-statements} holds.} For all $v \in V$, we have $d^+(v, \textsc{Black}) = d^-(v, \textsc{Black})$ before backtracking. Since there is exactly one needle vertex, then it must be the current vertex $u$ by Definition~\ref{d:needle-vertex}, which is in contradiction with the assumption that the current vertex is not the needle vertex.
    \paragraph{Case 2.}
        \textit{Statement 2 of Lemma~\ref{c:2-statements} holds.} We then have $d^+(u, \textsc{Black})+1 = d^-(u, \textsc{Black})$, implying that prior to backtracking from $u$, there existed a \textsc{Black} incoming edge incident on $u$, which is not possible by the definition of the algorithm.

        This completes the proof.
\end{proof}

\begin{lemma}
\label{lem:only-backtrack}
    The needle vertex changes only upon backtracking.
\end{lemma}

\begin{proof}
    Towards a contradiction, suppose otherwise; that is, assume that a forward step into $v$ makes $v$ the needle vertex. Before this forward step, we had $d^+(v, \textsc{Black}) = d^-(v, \textsc{Black})$. After the step, however, $d^+(v, \textsc{Black}) + 1 = d^-(v, \textsc{Black})$. Since $v$ is now the current vertex and $d^+(v, \textsc{Black}) \neq d^-(v, \textsc{Black})$, this contradicts Definition~\ref{d:needle-vertex}.
\end{proof}

\begin{remark}
    \label{r:first-backtrack-v0}
    A consequence of Lemmata~\ref{c:backtrack-needle} and~\ref{lem:only-backtrack} is that backtracking occurs for the first time at $v_0$ (the first needle vertex), and then makes a vertex incident on $v_0$ the new needle vertex. Observe that this process indeed forms a trail of \textsc{Dashed} edges in the graph.
\end{remark}

We now show that this trail is closed at termination, i.e.\ forms a cycle where $v_0$ is included. To do so, it suffices to show that the algorithm terminates and that $v_0$ is the last current vertex. To prove termination, we fix the cost function $c:M \to \mathbb{N} \cup\{0\}$ s.t. $M=\{\textsc{Black},~ \textsc{Red},~\textsc{Green},~\textsc{Dashed}\}$, $c(\textsc{Black})=3$, $c(\textsc{Red})=2$, $c(\textsc{Green})=2$, and $c(\textsc{Dashed})=0$.

\begin{lemma}
\label{c:termination}
    The algorithm \textsc{Space-Efficient-Hierholzer} terminates.
\end{lemma}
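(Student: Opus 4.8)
The plan is to use the cost function $c$ as a potential (monovariant). Define the total potential after $t$ iterations as $\Phi_t = \sum_{e \in E} c(\mathrm{mark}_t(e))$, the sum over all edges of the cost of their current mark. Initially every edge is \textsc{Black}, so $\Phi_0 = 3m$, and since $c$ is non-negative, $\Phi_t \ge 0$ at all times. The whole argument reduces to showing that $\Phi$ strictly decreases with every iteration of the main \textbf{while} loop; termination is then immediate because a strictly decreasing sequence of non-negative integers is finite.

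First I would note that each iteration performs exactly one conceptual edge traversal, and I would split along the two phases of the pseudocode. In the reverse-traversal (forward) phase ($i \le d^-(u)$), the algorithm crosses the incoming edge $\Gamma^-(u,i)$ and marks it \textsc{Red} or \textsc{Green}. Because $\texttt{next}[u]$ is strictly increasing and incoming edges occupy the index range $1,\dots,d^-(u)$, this edge is crossed for the first time and is therefore still \textsc{Black}; its cost drops from $3$ to $2$, so $\Phi$ decreases by $1$. In the backtracking phase the algorithm writes one outgoing edge and marks it \textsc{Dashed}. Here I would invoke Lemma~\ref{c:backtrack-needle} together with Corollary~\ref{cor:one-needle}: backtracking from $u$ occurs only when $u$ is the needle vertex with $\texttt{next}[u] > d^-(u)$, so $d^-(u,\textsc{Black}) = d^+(u,\textsc{Black}) = 0$ and every outgoing edge of $u$ is already \textsc{Red} or \textsc{Green}. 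Thus the written edge has cost $2$, which drops to $0$, and $\Phi$ decreases by $2$.

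The one point requiring care is that no edge is re-marked within the backtracking phase, i.e.\ that we never write an already-\textsc{Dashed} edge (which would leave $\Phi$ unchanged). I would argue this from the bookkeeping: $\texttt{next}[u]$ is strictly increasing, so each outgoing index is consumed at most once, and the \texttt{skipped} flag guarantees that the single \textsc{Red} edge $uB[u]$ is passed over exactly once and then emitted last (when $i > d^+(u)$). Hence the $d^+(u)$ outgoing edges of $u$ are each written exactly once and none is revisited; in particular the skip step does not produce an iteration without a mark change, since every backtracking iteration ends with a \textsc{Write}. This establishes that every iteration strictly decreases $\Phi$ by at least $1$.

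To conclude, $\Phi_0, \Phi_1, \Phi_2, \dots$ is a strictly decreasing sequence of non-negative integers with $\Phi_0 = 3m$, so it has at most $3m+1$ terms; therefore the loop executes at most $3m$ times and \textsc{Space-Efficient-Hierholzer} terminates. I expect the main obstacle to be not the counting itself but the bridge between the implicit edge-marking abstraction and the pseudocode: one must verify that the $\texttt{next}$/$\texttt{skipped}$/$B$ bookkeeping faithfully realises ``cross each incoming edge once ($\textsc{Black}\to\textsc{Red}/\textsc{Green}$) and backtrack over each outgoing edge once ($\textsc{Red}/\textsc{Green}\to\textsc{Dashed}$),'' so that each of the two cases decreases the potential exactly as claimed.
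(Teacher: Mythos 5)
Your proof is correct and follows essentially the same route as the paper's: the same potential function $\Phi = \sum_{e \in E} c(m(e))$, the observation that marks evolve monotonically $\textsc{Black} \to \{\textsc{Red},\textsc{Green}\} \to \textsc{Dashed}$, and the conclusion that a strictly decreasing non-negative integer potential forces termination. Your added care in quantifying the decrease per phase and in verifying via the $\texttt{next}$/$\texttt{skipped}$ bookkeeping that no \textsc{Dashed} edge is ever re-written is a sound (and slightly more detailed) bridge to the pseudocode than the paper's purely mark-level argument.
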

\begin{proof}
    Let $\Phi$ be a potential function defined as $\Phi(G)=\sum_{e \in E}c(m(e))$ for a graph $G=(V,E)$, where $m(e)$ denotes the mark of edge $e$; in other words, $\Phi$ is equal to the total sum of the costs of all edges in $G$. On an input graph $G$, we show that at each step of the traversal, the measure $\Phi(G)$ strictly decreases.

    Initially, all edges are \textsc{Black}, hence $\Phi(G)=3|E|$. Observe that each edge changes its mark over time in a fixed and monotonic order: $\textsc{Black} \to \{\textsc{Red},\textsc{Green}\} \to \textsc{Dashed}$. Since each step forward or backwards in the traversal alters the mark of an edge, and no \textsc{Red} or \textsc{Green} edge is remarked \textsc{Green} or \textsc{Red} respectively, the measure strictly decreases at each step, and the algorithm eventually terminates as $\Phi(G) \geq 0$.
\end{proof}

\begin{lemma}
\label{c:last-current}
    Upon termination of \textsc{Space-Efficient-Hierholzer}, the \textsc{Dashed} edges form a closed trail which includes $v_0$ as an endpoint.
\end{lemma}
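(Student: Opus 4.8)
The plan is to exhibit the sequence of \textsc{Dashed} edges, in the order the algorithm writes them, as a single contiguous walk $W = (x_0, x_1, \dots, x_\ell)$ in $G$, and then to pin down its two endpoints. First I would argue that $W$ really is a walk. The assignment $u \gets v$ performed immediately after each \textsc{Write}$(uv)$ means that the head of each written edge is exactly the vertex from which the algorithm continues; hence two consecutively written edges can fail to share a vertex only if forward (reverse-traversal) steps intervene between them. The crux is therefore to show that any maximal block of forward steps inserted between two writes returns to the vertex at which it began.

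For this I would lean on Lemma~\ref{c:2-statements} together with Lemma~\ref{c:backtrack-needle}. By Lemma~\ref{c:backtrack-needle} a write occurs only from the needle vertex, and since the current vertex then has no \textsc{Black} incoming edge, Lemma~\ref{c:2-statements} forces Statement~1 to hold at that moment. Writing an edge recolours it \textsc{Dashed} and leaves every \textsc{Black} degree unchanged, so Statement~1 still holds at the new current vertex $v$, which is thus again the needle vertex. If $v$ has \textsc{Black} incoming edges, the algorithm starts a forward block; re-reading the proof of Lemma~\ref{c:2-statements} shows that such a block keeps Statement~2 in force with the surplus vertex $a$ frozen as the block's start vertex $v$, and that Statement~1 (equivalently, the next write) can resume only upon re-entering $a = v$. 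Hence every forward block is a closed sub-excursion returning to its origin, consecutive writes share a vertex, and $W$ is a genuine trail: no edge is dashed twice by the monotone recolouring order $\textsc{Black}\to\{\textsc{Red},\textsc{Green}\}\to\textsc{Dashed}$ already exploited in Lemma~\ref{c:termination}. Applying the same reasoning to the very first backtracking step shows the first write leaves $v_0$, so $x_0 = v_0$.

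It remains to identify $x_\ell$. Here I would invoke Lemma~\ref{c:termination}: the algorithm halts, and since the loop guard is $c < m$ while $c$ increases by one per write and no edge is dashed twice, at halting exactly $m$ edges are \textsc{Dashed}. Thus $W$ traverses every edge of $G$ exactly once, so for each vertex $w$ its number of occurrences as a tail is $d^+(w)$ and as a head is $d^-(w)$. Counting occurrences in $(x_0,\dots,x_\ell)$ gives $d^+(w) - d^-(w) = [w = x_0] - [w = x_\ell]$; since $G$ is Eulerian we have $d^+(w) = d^-(w)$, so $[w = x_0] = [w = x_\ell]$ for every $w$, and taking $w = v_0 = x_0$ yields $x_\ell = v_0$. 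Therefore $v_0$ is the last current vertex, and $W$ is a closed trail with $v_0$ as its shared endpoint.

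I expect the main obstacle to be the contiguity argument of the second paragraph — formally establishing that each interleaved forward block is a closed excursion returning to its origin, so that the non-consecutively produced writes nonetheless chain into one walk — because the writes are separated by reverse-traversal phases rather than emitted back to back. Once contiguity, the starting point $x_0 = v_0$, and the all-edges-dashed property are in hand, the endpoint count closing the trail at $v_0$ is routine given the Eulerian balance.
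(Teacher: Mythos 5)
Your second paragraph (the contiguity of the writes via forward blocks returning to their origin, and $x_0=v_0$) is correct, and it actually supplies detail that the paper only asserts with ``Observe that this process indeed forms a trail of \textsc{Dashed} edges.'' The gap is in your third paragraph, at the step ``at halting exactly $m$ edges are \textsc{Dashed}.'' Within the paper's development this is circular: that all edges end up \textsc{Dashed} is precisely Corollary~\ref{cor:all-dashed}, and both it and Lemma~\ref{c:no-blacks} are proved \emph{using} the present lemma (they need the closed trail $T$ with $v_0\in T$). Your appeal to the loop guard $c<m$ does not circumvent this. Lemma~\ref{c:termination} is proved at the level of the conceptual marking process, and what it gives you is that this marking process halts; it does not give you that the loop of Algorithm~\ref{alg:euler-tour} exits with $c=m$. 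The pseudocode faithfully simulates the conceptual algorithm only while the conceptual algorithm has a legal move. A priori, the conceptual algorithm could halt with $c<m$: either stuck at a vertex $v\neq v_0$ that has no \textsc{Black} incoming and no \textsc{Green} or \textsc{Red} outgoing edge (backtracking rule~2 presupposes that a \textsc{Red} outgoing edge exists), or terminating at $v_0$ before all edges are used. In either scenario the pseudocode would not exit cleanly at all~-- it would read $B[u]$ for a vertex whose incident edges are exhausted (e.g.\ $B[v_0]=0$) and ``write'' a non-edge~-- so the potential argument behind Lemma~\ref{c:termination} no longer applies and one cannot conclude that $c$ ever reaches $m$. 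Ruling out the stuck-at-$v\neq v_0$ scenario is exactly the content of the lemma you are proving; your Eulerian-balance counting of the walk's endpoints presupposes it.

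What is missing is a \emph{local} argument that a vertex from which backtracking is required always has an available exit unless it is $v_0$. This is how the paper proceeds: it maintains the invariant that $d^+(v,\textsc{Red})+d^+(v,\textsc{Green}) = d^-(v,\textsc{Red})+d^-(v,\textsc{Green})$ for every $v\notin\{v_0,u\}$ (where $u$ is the current vertex), while the current vertex, if distinct from $v_0$, carries a surplus of exactly one outgoing \textsc{Red} or \textsc{Green} edge. Consequently, whenever the traversal must backtrack from some $u\neq v_0$, an outgoing \textsc{Red} or \textsc{Green} edge is guaranteed to exist, so the algorithm can only halt at $v_0$; no global count over the finished walk is needed. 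If you prefer to keep your global counting argument, you would first have to establish ``all edges are \textsc{Dashed} at termination'' independently of this lemma~-- which in the paper's architecture is the harder statement, not an ingredient available at this point.
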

\begin{proof}
    By Remark~\ref{r:first-backtrack-v0}, the algorithm first backtracks from $v_0$. Observe that, by the definition of an Eulerian graph, entering a non-starting vertex via an outgoing \textsc{Black} edge (which is then marked \textsc{Red} or \textsc{Green}) guarantees that there is a corresponding incoming \textsc{Black} edge that will eventually be used to exit the vertex and marked accordingly. Similarly, entering a non-starting vertex via a \textsc{Red} or \textsc{Green} outgoing edge (which is then marked \textsc{Dashed}) ensures that there is a corresponding incoming \textsc{Red} or \textsc{Green} edge that will eventually be used to exit the vertex and also be marked \textsc{Dashed}. Therefore, at every step of the algorithm, if $u$ is the current vertex, the following conditions hold:
    \begin{itemize}
        \item $d^+(v, \textsc{Red}) + d^+(v, \textsc{Green}) = d^-(v, \textsc{Red}) + d^-(v, \textsc{Green})$ for all $v \in V \setminus \{v_0, u\}$, and
        \item $d^+(u, \textsc{Red}) + d^+(u, \textsc{Green}) = d^-(u, \textsc{Red}) + d^-(u, \textsc{Green}) + 1$ if $u \neq v_0$.
    \end{itemize}
    Towards a contradiction, suppose now that the last current vertex is $v \neq v_0$ upon termination. This implies that there was no \textsc{Red} or \textsc{Green} outgoing edge to backtrack through from $v$, hence the termination. However, if $v$ were entered via a \textsc{Black} incoming edge, then, by the observation above, there would have been a \textsc{Black} incoming edge to resume the traversal, a contradiction. If, analogously, $v$ were entered via a \textsc{Green} or \textsc{Red} edge, then there would be at least one \textsc{Red} or \textsc{Green} outgoing edge to backtrack through, a contradiction. Therefore, the last current vertex cannot be in $V\setminus \{v_0\}$, and can only be $v_0$ since the algorithm terminates by Lemma~\ref{c:termination}.
\end{proof}

We now need to show that the closed trail produced by the algorithm includes all edges in the graph; it suffices to show that all edges are \textsc{Dashed} upon termination. First, we show that there are no \textsc{Black} edges at termination, indicating that all edges have been traversed at least once. The following structural property of Eulerian graphs comes in handy.

\begin{lemma}
    \label{o:remove-trail}
    Let $G$ be an Eulerian graph, and let $T$ be some arbitrary closed trail in $G=(V,E)$. Then, $G'=(V,E \setminus T)$ consists of multiple (possibly none) vertex-disjoint closed trails. 
\end{lemma}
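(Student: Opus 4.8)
The plan is to reduce the statement to the classical fact that a balanced, weakly connected directed multigraph admits an Eulerian circuit, and then apply it component by component. The key observation is that a closed trail is itself degree-balanced: since $T$ enters and leaves each vertex equally often, the sub-multigraph $T$ of edges used by the trail satisfies $d^+_T(v) = d^-_T(v)$ for every $v \in V$.

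First I would record the degree bookkeeping. Since $G$ is Eulerian, $d^+(v) = d^-(v)$ for all $v$; subtracting the balanced contribution of $T$ gives, for every vertex $v$,
\[
    d^+_{G'}(v) \;=\; d^+(v) - d^+_T(v) \;=\; d^-(v) - d^-_T(v) \;=\; d^-_{G'}(v).
\]
Thus $G'$ is again degree-balanced at every vertex, even though it need no longer be (strongly) connected.

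Next I would decompose $G'$ into its weakly connected components and set aside the isolated vertices (those with no incident edge in $G'$), which belong to no trail; this also accounts for the ``possibly none'' case, which arises exactly when $E \setminus T = \emptyset$. Each remaining nontrivial component $C$ is weakly connected and inherits the degree-balance property from the displayed identity. Here I would invoke the standard characterisation that a weakly connected, degree-balanced directed multigraph is in fact strongly connected, and hence Eulerian, so that $C$ contains a single closed trail traversing all of its edges exactly once. Since distinct weakly connected components share no vertex that carries an edge, the resulting family of closed trails --- one per nontrivial component --- is vertex-disjoint and covers exactly $E \setminus T$, which is the claim.

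The main obstacle is justifying the directed Eulerian criterion carefully: I must argue that weak connectivity together with $d^+ = d^-$ at every vertex forces a closed trail covering the entire component (equivalently, that these conditions already imply strong connectivity on the non-isolated vertices of the component). I would either cite this classical result or sketch the short constructive argument: repeatedly extract a directed cycle from the balanced component --- possible because following outgoing edges from any non-isolated vertex can only terminate by revisiting a vertex, and balance guarantees an unused outgoing edge whenever an incoming edge was used --- and splice these edge-disjoint cycles together at shared vertices, which connectivity guarantees, until a single Eulerian closed trail for the component remains.
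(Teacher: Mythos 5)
Your proposal is correct and follows essentially the same route as the paper's own proof: observe that removing a closed trail preserves the in-/out-degree balance at every vertex, decompose $G'$ into weakly connected components, and conclude that each nontrivial component is Eulerian and hence a closed trail. The only difference is one of care rather than substance --- you explicitly justify the step the paper glosses over (that weak connectivity plus degree balance already yields strong connectivity, and hence an Eulerian circuit, via cycle extraction and splicing), whereas the paper asserts ``Hence, $H$ is Eulerian'' without comment.
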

\begin{proof}
    Initially, for all $v \in V$, we have $d^+(v)=d^-(v)$. Clearly, removing the trail $T$ preserves that equality. Let $H$ be some arbitrary weakly\footnote{A set of vertices is considered \emph{weakly connected} if there exists a path of edges from every pair of vertices within that set when edge directions are ignored.} connected component of $G'$; then, for all vertices $v_H \in H$, we have $d^+(v_H)=d^-(v_H)$. Hence, $H$ is  Eulerian and its edges form a closed trail.
\end{proof}


\begin{lemma}
    \label{c:no-blacks}
    Upon termination of \textsc{Space-Efficient-Hierholzer}, no \textsc{Black} edges remain.
\end{lemma}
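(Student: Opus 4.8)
The plan is to first pin down the state of the \textsc{Black} edges at the moment of termination, then assume some survive and derive a contradiction from strong connectivity. The only genuinely hard ingredient turns out to be showing that every \emph{visited} vertex has been fully explored; everything else is bookkeeping on top of the lemmas already available.

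\emph{Reducing to a balanced \textsc{Black} subgraph.} By Lemma~\ref{c:last-current} the last current vertex is $v_0$, and the algorithm halts while in the backtracking phase at $v_0$; by Lemma~\ref{c:backtrack-needle} this means $v_0$ is the needle vertex at termination. Since $v_0$ is simultaneously the current vertex and, by Corollary~\ref{cor:one-needle}, the unique needle vertex, the first clause of Definition~\ref{d:needle-vertex} forces $d^+(v_0,\textsc{Black}) = d^-(v_0,\textsc{Black})$. This is incompatible with Statement~2 of Lemma~\ref{c:2-statements} (there the needle $a$ differs from the current vertex), so Statement~1 holds: $d^+(v,\textsc{Black}) = d^-(v,\textsc{Black})$ for every $v \in V$. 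Hence the surviving \textsc{Black} edges form a subgraph in which in- and out-degree agree at every vertex, and $d^-(v_0,\textsc{Black})=0$. I then assume, towards a contradiction, that this \textsc{Black} subgraph is nonempty.

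\emph{Closure of the visited set.} Let $S$ be the set of visited vertices at termination, so $v_0 \in S$. The operative observation is that a \textsc{Black} incoming edge of $v$ is precisely an incoming edge not yet used in the forward (reverse-traversal) phase, and that using an incoming edge $pv$ in the forward phase moves the walk to $p$ and thereby visits $p$. Consequently, if a visited vertex $v$ has $d^-(v,\textsc{Black}) = 0$, then all of its in-neighbours are visited. I would then invoke the \emph{crux} below, namely that at termination every visited vertex satisfies $d^-(\cdot,\textsc{Black}) = 0$. Granting this, $S$ is closed under taking in-neighbours; since $v_0 \in S$ and $G$ is strongly connected (every vertex has a directed path to $v_0$), we get $S = V$, whence $d^-(v,\textsc{Black}) = 0$ for \emph{all} $v$ and no \textsc{Black} edge remains — contradicting the assumption. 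This reduces the lemma to the crux.

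\emph{The crux, and the main obstacle.} It remains to show that at termination no visited vertex retains a \textsc{Black} incoming edge, i.e.\ that every visited vertex has already entered (indeed completed) its backtracking phase. This is the subtle point, exactly because the algorithm keeps no explicit DFS stack. The approach I would take is to prove that the walk never \emph{permanently abandons} a visited vertex that still has an untraversed \textsc{Black} incoming edge: the $B[\cdot]$-pointers form an in-tree rooted at $v_0$ whose currently-\textsc{Red} edges always provide a return path from the current vertex back towards $v_0$, so the walk is forced to come back and keep exploring such a vertex before it can settle at $v_0$ and halt (recall backtracking is performed only from the needle, Lemma~\ref{c:backtrack-needle}, and a vertex loses its \textsc{Red} edge only after all its incoming edges have been traversed). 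The hard part will be formalising this ``guaranteed return'' as an invariant and maintaining it through the \textsc{Green}-first backtracking rule, which lets the current vertex jump between branches rather than popping a clean stack; here the \textsc{Red}/\textsc{Green} flow-conservation identities established inside the proof of Lemma~\ref{c:last-current}, combined with the single-needle property of Corollary~\ref{cor:one-needle}, are what I expect to need. As an alternative route to the same contradiction one can use Lemma~\ref{o:remove-trail}: at termination the non-\textsc{Dashed} edges decompose into vertex-disjoint closed trails, and once the crux guarantees that every visited vertex is fully explored, any such trail must avoid $v_0$ and lie entirely on unvisited vertices, with no edges crossing to the \textsc{Dashed} trail through $v_0$ — which strong connectivity forbids.
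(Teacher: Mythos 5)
There is a genuine gap: your proof defers exactly the step that constitutes the lemma. The two parts you do carry out are fine --- (i) at termination Statement~1 of Lemma~\ref{c:2-statements} must hold (if Statement~2 held, the current vertex $v_0$ would have a \textsc{Black} incoming edge and could not terminate), and (ii) granting the ``crux,'' the visited set is closed under in-neighbours (a non-\textsc{Black} incoming edge of $v$ was reverse-traversed, so its tail became current and hence visited), and strong connectivity then forces every vertex to be visited and fully explored. But the crux itself --- that at termination no \emph{visited} vertex retains a \textsc{Black} incoming edge --- is never proved, and it is not a technicality that can be outsourced: modulo your (easy) closure argument it is equivalent to the lemma. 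Moreover, the intuition you offer for it is not salvageable as stated. It is simply false that ``the walk is forced to come back'' to an under-explored visited vertex: the walk can leave a visited vertex $v$ forward along a \textsc{Black} incoming edge and never become current at $v$ again; nothing local at $v$, and nothing about the $B[\cdot]$-tree providing a return path \emph{to $v_0$}, prevents this. (The easy case is a visited vertex that has been backtracked from at least once --- backtracking requires $d^-(\cdot,\textsc{Black})=0$ and marks are monotone --- but the hard case is precisely a visited vertex that was only ever left forward, and for such a vertex no ``guaranteed return'' invariant holds.)

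What actually closes this case in the paper is a global argument, not a return-to-$v$ argument: assuming untraversed edges survive, apply Lemma~\ref{o:remove-trail} to decompose the non-\textsc{Dashed} edges into closed trails lying in components $H_1,\dots,H_k$, intersect these with the \textsc{Dashed} trail $T$, and use the fact that the \textsc{Red} edges form an in-tree rooted at $v_0$ (so $d^+(\cdot,\textsc{Red})\le 1$, and $d^+(v_0,\textsc{Red})=0$) together with the \textsc{Red}/\textsc{Green} flow-conservation identities to exhibit a vertex $w\in T$ with $d^+(w,\textsc{Green})\ge 1$ and $d^+(w,\textsc{Red})=0$. Such a $w$ yields the contradiction: if $w=v_0$ the algorithm terminated while a \textsc{Green} exit existed, and if $w\ne v_0$ it backtracked over its \textsc{Red} edge while a \textsc{Green} exit existed, violating the \textsc{Green}-before-\textsc{Red} priority. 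You list the right ingredients (\textsc{Red} tree, flow conservation, Lemma~\ref{o:remove-trail}, the needle lemmas) but never assemble them into this argument; note also that your proposed ``alternative route'' via Lemma~\ref{o:remove-trail} explicitly presupposes the crux, so it is not an alternative but the same unproven statement. To repair the proof, replace the ``guaranteed return'' sketch by the component/\textsc{Red}-tree argument above (or prove the priority-violation contradiction directly), since that is where all the difficulty of the lemma resides.
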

\begin{proof}
    Let $T$ be the trail of \textsc{Dashed} edges computed by \textsc{Space-Efficient-Hierholzer} by Lemma~\ref{c:last-current}; for the sake of contradiction, we split the proof into two distinct assumptions.
    \paragraph{Case 1.}
    \textit{A \textsc{Black} edge is incident on a vertex $v \in T$.} If $v$ has a \textsc{Black} incoming edge, this implies that the algorithm backtracked from or terminated at $v$ despite there being a \textsc{Black} incoming edge to traverse, which is a contradiction. If, on the other hand, the \textsc{Black} edge is outgoing from $v$, then by Lemma~\ref{c:2-statements}, there must also be an incoming \textsc{Black} edge at $v$. Indeed, Statement 1 of the lemma applies, since the current vertex cannot possess incoming \textsc{Black} edges without violating the algorithm’s definition (i.e.\ backtracking would have occurred instead of termination). Therefore, the previous subcase applies.

    \paragraph{Case 2.}
    \textit{No \textsc{Black} edge is incident on a vertex $v \in T$.} By Lemma~\ref{o:remove-trail}, we can remove $T$ and obtain multiple Eulerian connected components, each containing a closed trail. Let $H_1,H_2, \dots, H_k$ be the $k$ vertex-disjoint strongly connected components obtained. Now, define $S_i=V_T \cap H_i$, where $V_T$ is the set of endpoints in $T$; in other words, $S_i$ is the set of all vertices that are part of both the trail and the $i$th connected component. We claim that for each $S_i$ for $i \in [k]$, there is at least one vertex $w \in S_i$ such that $d^+(w, \textsc{Green}) \geq 1$, and $d^+(w, \textsc{Red}) = 0$.

    First, observe that the \textsc{Red} edges marked by \textsc{Space-Efficient-Hierholzer} form a tree rooted at $v_0$, the starting vertex; hence $d^+(v, \textsc{Red}) \leq 1$ for any vertex $v$. Pick an arbitrary set $S_i$: using the invariant from the proof of Lemma~\ref{c:last-current}, for each $w \in S_i$, we have $d^+(w, \textsc{Red})+d^+(w, \textsc{Green}) = d^-(w, \textsc{Red})+d^-(w, \textsc{Green}) > 0$. If $v_0 \in S_i$, then we are done since $d^+(v_0, \textsc{Red})=0$ and thus $d^+(v_0, \textsc{Green}) \geq 1$. Now, towards a contradiction, suppose that we have $d^+(w, \textsc{Red})=1$ for all $w \in S_i$ (recall that this measure cannot be greater than $1$), and $v_0 \not\in S_i$. Since $H_i$ does not share vertices with other components in $\biguplus_{j \in [k]\setminus\{i\}}H_j$, there is no path from a vertex in $H_i \setminus S_i$ to $v_0$ that does not include an outgoing edge incident on a vertex in $S_i$. Thus, the assumption entails that the starting vertex (i.e.\ the root of the tree of \textsc{Red} edges) $v_0 \in H_i \setminus S_i$, which contradicts our assumption that $v_0 \in T$ since $T \cap (H_i \setminus S_i) = \emptyset$.

        Now, as we have shown that a vertex $w \in T$ such that $d^+(w, \textsc{Green}) \geq 1$ and $d^+(w, \textsc{Red}) = 0$ exists, we can pick one arbitrarily and consider two distinct cases. If $w = v_0$, then the algorithm terminated at $v_0$ despite $d^+(v_0, \textsc{Green}) \geq 1$, which is impossible by the way the algorithm is defined. Finally, if $w \neq v_0$, then the algorithm backtracked via a \textsc{Red} outgoing edge (later marked \textsc{Dashed}) despite there being a \textsc{Green} outgoing edge to backtrack through, which contradicts the definition of the algorithm.
        
    In either case, we reach a contradiction; therefore, no \textsc{Black} edges remain in the graph.
\end{proof}

The following corollary strictly follows from Case 2 of the proof of Lemma~\ref{c:no-blacks}.

\begin{corollary}
    \label{cor:all-dashed}
    Upon termination of \textsc{Space-Efficient-Hierholzer}, all edges are \textsc{Dashed}.
\end{corollary}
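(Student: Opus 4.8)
The plan is to combine Lemma~\ref{c:no-blacks} with a direct replay of the contradiction argument already developed in Case~2 of its proof. By Lemma~\ref{c:no-blacks} no \textsc{Black} edges survive at termination, and since every edge evolves monotonically through the marks $\textsc{Black} \to \{\textsc{Red}, \textsc{Green}\} \to \textsc{Dashed}$ (as used in Lemma~\ref{c:termination}), each edge is now coloured \textsc{Red}, \textsc{Green}, or \textsc{Dashed}. It therefore suffices to rule out any surviving \textsc{Red} or \textsc{Green} edge.

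First I would suppose, for contradiction, that the \textsc{Dashed} trail $T$ guaranteed by Lemma~\ref{c:last-current} does not exhaust $E$, so that $E \setminus T$ is nonempty and, by the previous paragraph, consists solely of \textsc{Red} and \textsc{Green} edges. Applying Lemma~\ref{o:remove-trail} to the closed trail $T$, the graph $G' = (V, E \setminus T)$ decomposes into vertex-disjoint closed trails; since $E \setminus T \neq \emptyset$, at least one nontrivial Eulerian component $H_i$ survives. I would then invoke the structural facts assembled in Case~2 of Lemma~\ref{c:no-blacks}: the \textsc{Red} edges form a tree rooted at $v_0$ (so $d^+(v, \textsc{Red}) \le 1$ for every $v$), and at termination each vertex is balanced on $\{\textsc{Red},\textsc{Green}\}$ edges, i.e.\ $d^+(w, \textsc{Red}) + d^+(w, \textsc{Green}) = d^-(w, \textsc{Red}) + d^-(w, \textsc{Green})$, which is strictly positive for any $w$ in a nontrivial component. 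The tree argument of Case~2 then produces a vertex $w \in H_i \cap V_T$ with $d^+(w, \textsc{Green}) \ge 1$ and $d^+(w, \textsc{Red}) = 0$. Such a $w$ is a vertex the algorithm genuinely reached, and the contradiction is immediate from the backtracking rules: if $w = v_0$ the algorithm halted while a \textsc{Green} out-edge was still available, violating the termination condition; and if $w \neq v_0$ it must have backtracked along its \textsc{Red} out-edge before exhausting its \textsc{Green} ones, violating the priority of \textsc{Green} over \textsc{Red}. Either way we contradict the definition of the algorithm, so $E \setminus T = \emptyset$ and all edges are \textsc{Dashed}.

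The main obstacle I anticipate is justifying that every nontrivial component $H_i$ actually meets the trail $T$, i.e.\ that $S_i = V_T \cap H_i \neq \emptyset$, since only then is the extracted vertex $w$ one from which the algorithm really backtracked or terminated. This is the reachability-to-$v_0$ step: by Lemma~\ref{c:last-current} we have $v_0 \in V_T$, so if $v_0 \in H_i$ then $v_0 \in S_i$; otherwise strong connectivity of $G$ gives a path from $H_i$ to $v_0$, whose first edge leaving the maximal $G'$-component $H_i$ must be a \textsc{Dashed} edge of $T$, placing its tail in $V_T \cap H_i$. In both cases $S_i \neq \emptyset$, after which the remainder is the routine case split on whether $w = v_0$. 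Since this entire chain of reasoning is exactly Case~2 of Lemma~\ref{c:no-blacks} with ``\textsc{Black} edge'' replaced by ``\textsc{Red} or \textsc{Green} edge'', the corollary indeed follows strictly from that case.
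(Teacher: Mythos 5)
Your proof is correct and takes essentially the same route as the paper's: the paper likewise combines Lemma~\ref{c:no-blacks} (no \textsc{Black} edges remain) with a direct invocation of the Case~2 argument from that lemma's proof, obtaining a vertex $w \in T$ with $d^+(w,\textsc{Green}) \geq 1$ and $d^+(w,\textsc{Red}) = 0$ and contradicting the backtracking rules. Your explicit verification that every component $H_i$ of $G' = (V, E \setminus T)$ meets $V_T$ (via strong connectivity of $G$) merely spells out a step that the paper's Case~2 handles implicitly, so this is an elaboration rather than a different approach.
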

\begin{proof}
    For the sake of contradiction, suppose that, at termination, the graph contains \textsc{Red} or \textsc{Green} edges; recall that it has been established in Lemma~\ref{c:no-blacks} that the graph cannot contain \textsc{Black} edges. Let $T$ be the trail of \textsc{Dashed} edges produced by the algorithm (Lemma~\ref{c:last-current}). Then, by the argument of Case 2 of the proof of Lemma~\ref{c:no-blacks}, there must exist some vertex $w \in T$ such that $d^+(w, \textsc{Green}) \geq 1$ and $d^+(w, \textsc{Red}) = 0$, which leads to contradicting the definition of the algorithm. Therefore, all edges are \textsc{Dashed} at termination.
\end{proof}

\begin{theorem}
    The algorithm \textsc{Space-Efficient-Hierholzer} correctly computes an Eulerian cycle.
\end{theorem}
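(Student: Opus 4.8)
The plan is to assemble the preceding lemmas into a short synthesis, since by this point the substantive work is already done. Recall that an Eulerian cycle is a closed trail traversing every edge exactly once; accordingly I would establish three things about the output of \textsc{Space-Efficient-Hierholzer}: that it halts, that the edges it writes form a closed walk based at $v_0$, and that this walk uses each edge of $G$ exactly once.

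First I would invoke Lemma~\ref{c:termination} to guarantee the algorithm halts, so that the phrase ``upon termination'' is meaningful and the other termination-time lemmas apply. Next, the crucial bookkeeping observation is that each edge changes its mark monotonically along $\textsc{Black}\to\{\textsc{Red},\textsc{Green}\}\to\textsc{Dashed}$, and that an edge is written to the output stream precisely at the moment it becomes \textsc{Dashed}; hence each edge is written at most once, and the written edges are exactly the \textsc{Dashed} ones. Combining this with Corollary~\ref{cor:all-dashed}, which asserts that every edge is \textsc{Dashed} at termination, gives that every edge of $G$ is written exactly once.

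To see that the output is a genuine closed walk rather than merely a set of edges, I would unfold the backtracking phase of Algorithm~\ref{alg:euler-tour}: each \textsc{Write}$(uv)$ is immediately followed by the update $u \gets v$, so consecutive written edges share an endpoint and the output is a directed walk in $G$. Since (by the consequence of Lemma~\ref{c:backtrack-needle}) the first backtracking step occurs at $v_0$, this walk begins at $v_0$; and by Lemma~\ref{c:last-current} it terminates with $v_0$ as the final current vertex, so the walk returns to $v_0$ and is closed. Together with the ``exactly once'' count above, the output is therefore a closed trail covering all of $E$, i.e. an Eulerian cycle.

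I do not expect a genuine obstacle at this stage, as the hard cases have been dispatched in Lemma~\ref{c:no-blacks} and Corollary~\ref{cor:all-dashed}. The one point deserving care is the transition from ``the \textsc{Dashed} edges form a closed trail'' (a statement about a \emph{set} of edges) to ``the \textsc{Write} operations emit that trail in a valid traversal order'': the argument must make explicit that the reverse edge-centric traversal produces edges in an order that spells out a connected walk, so that the output is literally the Eulerian cycle listed in sequence, not merely the correct edge multiset.
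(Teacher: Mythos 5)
Your proposal is correct and takes essentially the same route as the paper, whose entire proof is the single sentence ``The statement seamlessly follows from Corollary~\ref{cor:all-dashed}'' --- implicitly bundling Lemma~\ref{c:termination}, Lemma~\ref{c:last-current}, and the fact that the backtracking phase writes \textsc{Dashed} edges in traversal order, exactly the pieces you assemble explicitly. If anything, your version is more careful than the paper's one-liner, particularly on the point you flag at the end: the passage from ``the \textsc{Dashed} edges form a closed trail'' as a statement about an edge set to ``the \textsc{Write} calls emit that trail in a valid consecutive order,'' which the paper never addresses.
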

\begin{proof}
    By Lemma~\ref{c:termination}, the algorithm terminates after a finite number of steps. Lemma~\ref{c:last-current} establishes that, upon termination, the set of \textsc{Dashed} edges forms a closed trail containing the starting vertex $v_0$. Moreover, Lemma~\ref{lem:only-backtrack} and Lemma~\ref{c:backtrack-needle} together ensure that a new \textsc{Dashed} edge is created exactly when the traversal backtracks from the current needle vertex, and that the next needle vertex is incident to the previously \textsc{Dashed} one, thus producing the trail in a correct order. Finally, Corollary~\ref{cor:all-dashed} implies that all edges are marked \textsc{Dashed} upon termination, which ensures that the closed trail includes every edge exactly once.
\end{proof}

This completes the proof of correctness.
\subsection{Space and Time Complexity}
\label{ss:complexity}
Recall that we assume computing the $i$th in- or out-neighbour of a vertex, or its in- or out-degree, takes $\mathrm{O}(1)$ time.

\begin{theorem}
    The algorithm \textsc{Space-Efficient-Hierholzer} (Algorithm~\ref{alg:euler-tour}) terminates in time $\mathrm{O}(m)$, where $m$ is the number of edges in the input graph.
\end{theorem}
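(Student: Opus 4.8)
The plan is to bound the total number of iterations of the \texttt{while} loop, since each iteration performs only a constant amount of work (a constant number of arithmetic operations, array accesses, and $\mathrm{O}(1)$-time neighbour/degree queries). Concretely, I would argue that each iteration either performs one forward step (reverse-traversal phase) or one backtracking step, and that each such step corresponds to changing the mark of a distinct edge. The key accounting observation, already exploited in the termination argument (Lemma~\ref{c:termination}), is that every edge passes through the monotone sequence of marks $\textsc{Black} \to \{\textsc{Red},\textsc{Green}\} \to \textsc{Dashed}$: a forward step consumes one \textsc{Black} edge (recolouring it \textsc{Red} or \textsc{Green}), and a backtracking step consumes one \textsc{Red} or \textsc{Green} edge (recolouring it \textsc{Dashed}).

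From this I would conclude that the number of forward steps is at most $m$ (one per edge, when it first leaves the \textsc{Black} state) and the number of backtracking steps that actually \emph{write} an edge is also at most $m$ (one per edge, when it becomes \textsc{Dashed}); indeed, by Corollary~\ref{cor:all-dashed} every edge ends up \textsc{Dashed}, so there are exactly $m$ writing backtracking steps, which is also consistent with the loop guard $c < m$ and the increment $c \gets c+1$ on each write. The only subtlety is the \emph{skipped} edge: when the candidate outgoing edge equals $B[u]$ and $\texttt{skipped}[u]=0$, the iteration advances $\texttt{next}[u]$ by an extra unit and re-reads $i$, but this still happens within a single loop iteration and adds only $\mathrm{O}(1)$ work; moreover it occurs at most once per vertex (guarded by $\texttt{skipped}[u]$), contributing at most $n$ extra unit-advances overall.

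Putting these together, the total number of loop iterations is $\mathrm{O}(m)$: each iteration either advances some $\texttt{next}[u]$ into the forward range (at most $\sum_v d^-(v) = m$ times), or into the backtracking range to write an edge (exactly $m$ times), or performs the one-time skip at a vertex (at most $n$ times). Since $\texttt{next}[u]$ only ever increases and is bounded by $d(u)+1$ at each vertex, the grand total of unit-advances across all vertices is $\sum_v (d(u)+1) = 2m + n$, and each advance is handled in one iteration with constant work. Because the graph is Eulerian it is connected, so $n = \mathrm{O}(m)$ and the bound $\mathrm{O}(n+m) = \mathrm{O}(m)$ follows. The initialisation in lines~1--5 costs $\mathrm{O}(n)$, which is absorbed into $\mathrm{O}(m)$.

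The main obstacle I anticipate is making the per-iteration accounting airtight in the presence of the skip mechanism and the two-branch structure of the loop: I must verify that no iteration is ``wasted'' without advancing $\texttt{next}[u]$, so that the monotone growth of the $\texttt{next}$ counters genuinely bounds the iteration count. The cleanest way to handle this is to observe that \emph{every} iteration strictly increases $\sum_{v} \texttt{next}[v]$ (forward steps and writing backtracks each do $\texttt{next}[u] \gets \texttt{next}[u]+1$ once, and the skip branch adds a further increment), and that this sum is bounded above by $\sum_v (d(u)+1) = 2m+n$ by the termination guarantee of Lemma~\ref{c:termination} together with the fact that $\texttt{next}[u]$ never exceeds $d(u)+1$. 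This reduces the running-time bound to the already-established termination plus a monovariant argument, rather than requiring a fresh combinatorial analysis.
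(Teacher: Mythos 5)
Your proposal is correct and is essentially the paper's own argument: both bound the loop at $\mathrm{O}(m)$ iterations---at most $m$ forward steps (one per incoming edge, since $\texttt{next}[u]$ only increases and the forward branch requires $\texttt{next}[u] \le d^-(u)$, i.e.\ $\sum_v d^-(v) = m$ in total) plus exactly $m$ writing backtracking steps forced by the guard $c < m$---with $\mathrm{O}(1)$ work per iteration including the skip, and both finish with $n = \mathrm{O}(m)$ for connected Eulerian graphs. The paper phrases the count directly via the monotone $\texttt{next}$ counters rather than edge-mark transitions, but this is the same accounting.
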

\begin{proof}
    Termination follows from Lemma~\ref{c:termination}. Initialising the arrays $\texttt{next}$, $\texttt{visited}$, $\texttt{skipped}$ and $B$ takes $\mathrm{O}(n)$ time. We claim that the \textbf{while} loop (Lines 6-28) iterates $2m$ times. The loop terminates when $c \ge m$; and during an iteration, $c$ is incremented if and only if $\texttt{next}[u] > d^-(u)$. Since the latter is incremented at each iteration of the loop, the \textbf{if} statement (Line 9) is satisfied exactly $\sum_{v \in V}d^-(v)=m$ times. Thus, the \textbf{else} block (Line 16) is executed $m$ times.

    The overall complexity is $\mathrm{O}(n+m) \in \mathrm{O}(m)$ since, in any Eulerian graph of $2$ or more vertices, $n \leq m$.
\end{proof}

\begin{theorem}
    The algorithm \textsc{Space-Efficient-Hierholzer} (Algorithm~\ref{alg:euler-tour}) uses $\mathrm{O}(n\lg m)$ bits of working space.
\end{theorem}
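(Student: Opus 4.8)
The plan is to charge the working memory object by object. The only persistent state the algorithm keeps (Line 1) consists of the four length-$n$ arrays \texttt{next}, \texttt{visited}, \texttt{skipped}, and $B$, together with the handful of scalar variables $c$, $v_0$, $u$, $i$, and $v$. Since there are only a constant number of these, it suffices to bound the bit-width of a single entry of each array and the width of each scalar; the total is then a sum of $\mathrm{O}(1)$ such terms, and I would simply add them up at the end.

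First I would dispose of the cheap objects. The arrays \texttt{visited} and \texttt{skipped} are bit vectors whose entries only ever take the values $0$ or $1$, so each occupies $n$ bits, i.e. $\mathrm{O}(n)$ bits in total. Each entry of $B$ holds the index of a predecessor vertex in $[n]$ and therefore needs $\lceil \lg n\rceil = \mathrm{O}(\lg n)$ bits, giving $\mathrm{O}(n\lg n)$ bits overall. Among the scalars, $v_0$, $u$, and $v$ are vertex indices ($\mathrm{O}(\lg n)$ bits), $c$ is a counter bounded by $m$, and $i$ is an edge index bounded by a degree; hence the scalars together contribute only $\mathrm{O}(\lg m)$ bits.

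The single quantitative step that requires an argument is the width of a \texttt{next} entry; the claim is that $\texttt{next}[u]\le d(u)+1 = d^-(u)+d^+(u)+1$ holds throughout, so each entry fits in $\mathrm{O}(\lg m)$ bits (as $d(u)\le 2m$) and the whole array uses $\sum_{u\in V}\mathrm{O}(\lg m) = \mathrm{O}(n\lg m)$ bits. I would establish this from the control flow together with the edge-consumption accounting already used in the running-time analysis: $\texttt{next}[u]$ is incremented only in iterations where $u$ is the current vertex, once at Line 7 and at most one additional time via the skip at Line 20 (guarded by $\texttt{skipped}[u]$, which is set at most once). While $\texttt{next}[u]\le d^-(u)$ the algorithm is in the reverse-traversal branch (Line 9), which consumes a distinct incoming edge of $u$ each time and so fires at most $d^-(u)$ times for $u$; once $\texttt{next}[u] > d^-(u)$ it is in the backtracking branch (Line 16), where the index $i=\texttt{next}[u]-d^-(u)$ runs over the $d^+(u)$ outgoing edges, the last backtrack being along the \textsc{Red} edge when $i > d^+(u)$, after which $u$ is never again the current vertex. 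Counting $d^-(u)$ reverse increments, $d^+(u)$ backtracking increments, and at most one skip yields $\texttt{next}[u]\le d(u)+1$.

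Finally I would combine the estimates. The dominant contributions are $\mathrm{O}(n\lg m)$ from \texttt{next} and $\mathrm{O}(n\lg n)$ from $B$, with the remaining terms being $\mathrm{O}(n)$ and $\mathrm{O}(\lg m)$. Since any Eulerian graph on at least two vertices satisfies $n\le m$ (already invoked in the running-time proof), we have $\lg n\le\lg m$, so $\mathrm{O}(n\lg n)\subseteq\mathrm{O}(n\lg m)$ and every term collapses into $\mathrm{O}(n\lg m)$ bits. I expect the per-entry bound on \texttt{next} to be the only real obstacle — it is where the argument has to connect the stored counters back to the traversal's guarantee that each edge is consumed at most once — whereas everything else is routine bit accounting.
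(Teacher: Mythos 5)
Your proposal is correct and takes essentially the same approach as the paper's (very brief) proof: object-by-object accounting, with \texttt{visited} and \texttt{skipped} as bit vectors costing $\mathrm{O}(n)$ bits and \texttt{next} and $B$ as arrays of $\mathrm{O}(\lg m)$-bit entries, totalling $\mathrm{O}(n \lg m)$ bits. The paper simply asserts that entries of \texttt{next} and $B$ are bounded by $m$, whereas you derive the sharper invariant $\texttt{next}[u] \le d(u)+1$ from the control flow --- a worthwhile refinement, since $d(u)+1$ can in fact exceed $m$ (e.g.\ for a vertex incident to many loops), although this does not affect the $\mathrm{O}(\lg m)$ per-entry bound.
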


\begin{proof}
    The arrays $\texttt{visited}$ and $\texttt{skipped}$ take $\mathrm{O}(n)$ bits, as these represent bit vectors. Values in the arrays $\texttt{next}$ and $B$ are bounded by $m$, and thus the arrays take $\mathrm{O}(n \lg m)$ bits. 
\end{proof}

\section{Conclusion}
\label{s:conclusion}
We presented a simple and memory-efficient algorithm for computing Eulerian cycles in directed graphs. Unlike classical implementations of Hierholzer’s algorithm, which require storing all edges or large stacks, our method uses only $\mathrm{O}(n\lg m)$ bits of working space by storing a small amount of data per vertex rather than per edge. Despite the reduced memory usage, our algorithm still runs in linear time and produces the Eulerian cycle incrementally, without needing to reverse it at the end. Our approach is especially useful for dense graphs, where the number of edges is much larger than the number of vertices, and in settings where memory is limited. It is also well-suited for environments such as graph rewriting systems that do not support stacks or pointer-based data structures.

Our approach focuses on directed graphs, but the underlying principle can also be applied to undirected graphs. A nuisance not present in directed graphs is that an edge might have been traversed in the opposite direction when we now consider it, so we have to detect this. When the undirected graph is simple (i.e.\ no parallel edges) and the adjacency lists are sorted (that is, for all $i < j \in [d(v)]$, we have $\Gamma(v, i) < \Gamma(v, j)$ for all $v \in V$), we can determine whether an edge $uv$ incident at $u$ has already been traversed via $\Gamma(v, \texttt{next}[v]) \geq u$.

Extending the algorithm to handle unsorted adjacency lists and undirected multigraphs, while preserving or improving the current time and space complexity, remains open for future work.

\subsection*{Acknowledgements.}

We would like to thank Viktor Zamaraev and Nikhil Mande for the initial discussions on the topic.

\bibliographystyle{plain}
\bibliography{bib}

\end{document}